\newtheorem{theorem}{Theorem}     
\newtheorem{lemma}{Lemma}         
\newtheorem{prop}{Proposition}    
\newcommand{\ok}{{\rm ok}}
\newcommand{\ko}{{\rm ko}}
\newcommand{\tails}{{\rm tails}}
\newcommand{\unknown}{{\rm unk}}
\newcommand{\uniform}{{\rm unif}}
\title{Goal-Oriented Joint Source-Channel Coding:\\ Distortion - Classification - Power Trade-off}
 \author{
 	\IEEEauthorblockN{
 		Andriy Enttsel \IEEEauthorrefmark{1}, 
 		Weichen Wang \IEEEauthorrefmark{3}, 
 		Mauro Mangia \IEEEauthorrefmark{1}\IEEEauthorrefmark{2}, 
 		Riccardo Rovatti \IEEEauthorrefmark{1}\IEEEauthorrefmark{2},
 		Deniz Gündüz \IEEEauthorrefmark{3}}
 		\IEEEauthorblockA{\IEEEauthorrefmark{1}DEI, 
        \IEEEauthorrefmark{2}ARCES, University of Bologna, Italy,
        \IEEEauthorrefmark{3}Dept. of Electrical and Electronic Eng, Imperial College London, UK}
 		\IEEEauthorblockA{\{andriy.enttsel,~mauro.mangia,~riccardo.rovatti\}@unibo.it - \{weichen.wang18,~d.gunduz\}@imperial.ac.uk}
}
\g@addto@macro\normalsize{%
  \setlength\abovedisplayskip{5pt}
  \setlength\belowdisplayskip{5pt}
  \setlength\abovedisplayshortskip{5pt}
  \setlength\belowdisplayshortskip{5pt}
}
\begin{document}

\maketitle
\begin{abstract}
Joint source-channel coding is a compelling paradigm when low-latency and low-complexity communication is required. This work proposes a theoretical framework that integrates classification and anomaly detection within the conventional signal reconstruction objective. Assuming a Gaussian scalar source and constraining the encoder to piecewise linear mappings, we derive tractable design rules and explicitly characterize the trade-offs between distortion, classification error, and transmission power.
\end{abstract}
\begin{IEEEkeywords}
Joint source-channel coding, classification, anomaly detection, semantic communication.
\end{IEEEkeywords}
\section{Introduction}
\label{sec:intro}

In practical applications where low latency and low complexity are critical, joint source channel coding (JSCC) offers an attractive alternative to classical digital schemes. By combining source compression and channel coding into a single operation, JSCC eliminates the need for buffering and block-based processing, typical of digital codecs, and enables “zero delay” transmission \cite{Gündüz_ProcIEEE2024}. 

However, the main limitation of JSCC lies in the lack of a systematic design procedure for encoder and decoder mappings covering general source-channel pairs. Early work derived optimal linear mappings \cite{Goblick_TIT1965,Lee_1976TC}. Later results highlighted their suboptimality when there is a mismatch between the source and channel bandwidths, as well as provided necessary conditions for optimal nonlinear encoder and decoder mappings under a power constraint \cite{Akyol_2014TIT}. Yet, the numerical methods used to find these mappings are often sensitive to local minima.

In response, more recent research has developed deep learning approaches to JSCC, with promising results on complex sources such as images and text \cite{Farsad_2018ICASSP, Bourtsoulatze_TCCN2019, Dai_JSAC2022}. Nevertheless, closed-form optimal mappings remain known only for specific cases, such as scalar Gaussian sources over additive white Gaussian noise (AWGN) channels.
A comprehensive review on JSCC can be found in \cite{Gündüz_ProcIEEE2024}.

A complementary trend is the design of semantic and task-oriented communication systems \cite{Kalfa_2021DSP, Gündüz_JSAC2023}. These paradigms reduce communication overhead by transmitting only information relevant to accomplish a downstream task, such as classification, rather than reconstructing the full signal. This is particularly beneficial for machine-oriented communication scenarios, where the objective is to process high-dimensional data with minimal delay and bandwidth consumption \cite{Gündüz_JSAC2023}.

One of the most common downstream tasks is classification. A substantial body of work has examined classification-aware lossy compression. Earlier studies \cite{Oehler_1995TPAMI, Baras_1999TIT, Gupta_2003TIT} focused on tailoring vector quantizers by integrating classification-related terms in the distortion measure to explicitly manage the distortion–classification trade-off. More recent work integrated classification in neural network-based codecs \cite{Luo_DCC2021} and JPEG compression standard \cite{Bai_2022AAAI}.

\begin{figure}

\centering
\includegraphics[width=\columnwidth]{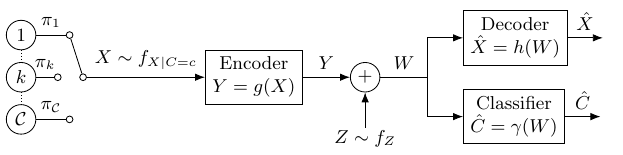}
\caption{JSCC scheme for a source corrupted by additive noise to be reconstructed and classified at the receiver.}
\label{fig: scheme}
\end{figure}

Anomaly detection (AD) is also gaining particular attention \cite{Marchioni_TSMC2024, Enttsel_2024AICAS, Enttsel_EUSIPCO2024}. Unlike conventional classification, which assumes that all classes are known a priori, AD involves only a single nominal class representative of normal behavior. For this reason, it is often referred to as one-class classification. The goal is to identify deviations from this class, including all events that are rare, unexpected, or qualitatively different, which may compromise system operation \cite{Ruff_2021ProcIEEE}. Therefore, AD naturally requires low-latency, low-bandwidth communication, making zero-delay analog schemes especially well suited. The work in \cite{Marchioni_TSMC2024} analyzes how lossy digital schemes optimized for sole reconstruction affect the subsequent anomaly detection task. In \cite{Enttsel_2024AICAS, Enttsel_EUSIPCO2024}, the authors design autoencoder-based schemes that manage the trade-off between anomaly detection and reconstruction.

This work aims to extend the theoretical results of JSCC beyond the objective of classical signal reconstruction by incorporating classification and anomaly detection within the system design. 

In the case of binary classification, we assume a Gaussian source with balanced binary classes and constrain the encoder to a piecewise linear form. We also report learning-based numerical results without constraints, which provides greater performance at the cost of computational complexity. 

For anomaly detection, we show that, under similar assumptions, a modified JSCC scheme can be designed to simultaneously preserve signal fidelity under normal conditions and transmit sufficient statistics to enable anomaly detection at the receiver. Proofs of the main theoretical results are reported in Appendices A and B.

\section{Problem setup}
\label{sec:problem setup}

We consider a source pair $(X,C)$, where $X \in \mathbb{R}$ is modeled as a mixture of $ \mathcal{C} $ distributions, $C$ is the discrete class: $C \in \mathbb{Z}^+_{< \mathcal{C}}$, so that $f_X(x) = \sum_{c=0}^{\mathcal{C}-1} \pi_c f_{X|C=c}(x)$, where $\pi_c$ is the weight of each class $c$. The source is to be transmitted through a single AWGN channel with zero delay. The objective is to reconstruct $X$ as well as obtain the classification $C$ at the receiver. As shown in Fig.~\ref{fig: scheme}, we consider a JSCC scheme with a separate decoder $h$ and a classifier $\gamma$ to achieve optimal information utilization. The optimal encoder-decoder-classifier triple $\left(g^\star, h^\star, \gamma^\star \right)$ can then be defined as the solution to the following distortion-classification-power (DCP) optimization problem. 
\begin{align}
\label{eq:DCP}
\tag{DCP}
(g^\star, h^\star, \gamma^\star)
&= \arg\min_{\substack{g \in \mathcal{G},\, h \in \mathcal{H},\, \gamma \in \Gamma}}
   \mathrm{MSE}\!\left(X,\hat{X}\right) \\
   \notag
\text{s.t.}\quad
& \Pr \{C \neq \hat{C}\} \le P_e, \quad \mathbb{E} \left[Y^2\right] \le P,
\end{align}
where $\mathcal{G},\mathcal{H} \subseteq \mathcal{M}(\mathbb{R},\mathbb{R}) $ are the measurable deterministic\footnote{Determinism can be derived following the same arguments as in \cite{Akyol_2014TIT}.} function spaces and $\Gamma$ is the set of all deterministic\footnote{We make the assumption that the measure of set of events with equal posteriors is zero, which is realistic. Then determinism follows from the definition of Bayes classifier.} functions $\gamma: \mathbb{R} \to \mathbb{Z}^+_{< \mathcal{C}}$. That is, $( g , h, \gamma)$ are optimized within $ (\mathcal{G},\mathcal{H}, \Gamma)$ to minimize the mean squared error (MSE), subject to a constraint in classification error $P_e$ and the transmitted average power $P$. Note that while $(X,C)$ may have a complicated dependency, in this work, we concentrate on the case where $c(x)$ is well-defined, i.e., $C$ can be uniquely determined by the value of $X$.

\section{Binary classification}
\label{sec: binary class}
\subsection{Preliminaries}
To solve DCP analytically, without loss of generality, we consider a zero-mean\footnote{This is easily reducible from an arbitrary Gaussian, with constant shifting.} Gaussian source $X \sim \mathcal{N}(0, \sigma_X^2)$ with balanced disjoint binary classes, i.e., $C = 1\left\{ x>0 \right\}$  so that $f_X(x) = 0.5f_{X|C=0}(x) + 0.5  f_{X|C=1}(x)$.
As channel noise, we consider additive Gaussian white noise (AWGN): $Z \sim \mathcal{N}(0, \sigma_Z^2)$. 

We limit the set of encoding functions $\mathcal{G}$ to piecewise linear mappings of the form:
\begin{equation}
\label{eq: encoder}
\mathcal{\bar{G}} = \left\{ g(x) = A x + B \cdot \mathbf{1}_I(x) \;:\; A, B \in \mathbb{R}_+ \right\},
\end{equation}
where $\mathbf{1}_I(\cdot)$ is an indicator function with support of finite unions of open intervals. It can be immediately seen that under balanced binary class, due to symmetry, the indicator function reduces to Heaviside function, which is $\operatorname{sign}(x)$.

We may now see that $\mathcal{\bar{G}}$, paired with the power constraint, can also be viewed as a linear combination of the optimal encoders for two different objectives: reconstruction and classification.
For the former, the optimal encoding is given by $ g(x) =  x \sqrt{P}/ \sigma_X $, while for the latter, the problem can be viewed as analog bit transmission, where the optimal encoding that maximizes classification performance is $ g(x) = \sqrt{P} \, \operatorname{sign}(x) $.

We work with dimensionless parameters: $\alpha = A\sigma_X / \sigma_Z$, $\beta = B / \sigma_Z$ and $\operatorname{SNR}=P / \sigma_Z^2$ and normalized variables $\tilde{x} = x / \sigma_X$, 
$\tilde{z} = z / \sigma_Z, \tilde{w} = w /\sigma_Z
= \tilde g (\tilde x) + \tilde{z} = \alpha\,\tilde{x} + \beta\,\operatorname{sign}(\tilde{x}) + \tilde{z}$ and 
$\tilde h(\tilde w) = \hat{x} / \sigma_X$.
All subsequent expressions are written in terms of the standard normal probability density function (PDF), and tail function
\begin{equation}
\phi(\xi) =  \exp\left( -\xi^2 / 2 \right) / \sqrt{2\pi},\qquad
Q(\xi) = \textstyle{\int_{\xi}^{\infty}} \phi(s)\,d s.
\end{equation}
We also rely on Owen’s T function $T(h,a)$ \cite{Owen_1956AMS}, which yields compact
expressions for the cumulative distribution function $\Phi_{\mathrm{SN}}$ and the PDF $\phi_{\mathrm{SN}}$ of the skew-normal distribution with shape parameter $\lambda$ \cite{Azzalini_1985SJoS}:
\begin{align}
\Phi_\mathrm{SN}(\xi;\lambda) &= Q(- \xi) - 2\,T(\xi,\lambda),\\
\label{eq: skew pdf}
\phi_\mathrm{SN}(\xi;\lambda)& = 2\,\phi(\xi)\,Q(-\lambda \xi).
\end{align}

\subsection{Main results}
The first main result is the decoder's closed-form expression for the considered class of encoders. Proofs of the results presented in this subsection are provided in Appendix~A.
\begin{lemma}[Decoder]
    \label{lemma: decoder}
     For an arbitrary encoder $g \in \mathcal{G}$ the minimum mean square error (MMSE) estimator decoder is
     \begin{align}
        &\tilde h_{\alpha,\beta}(\tilde w)=
        \frac{2}{\pi\sqrt{\alpha^2+1}}\,
        \frac{\exp\left[- \left(\tilde w^2 + \beta^2 \right) / 2\right] \sinh\left(\tilde w \beta \right)}{S_-\left(\tilde w \right)+S_+\left(\tilde w \right)}  \\
        &+\frac{\alpha}{\alpha^2+1}\,
        \frac{(\tilde w+\beta)\,S_-(\tilde w)+(\tilde w-\beta)\,S_+(\tilde w)}
        {S_-(\tilde w)+S_+(\tilde w)} \notag 
        \end{align}
    where $S_{\pm}(\tilde w)=\phi_{\mathrm{SN}}\left[(\tilde w\mp\beta )/\sqrt{\alpha^2+1} ;\pm\alpha\right]$.
\end{lemma}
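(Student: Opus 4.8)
The plan is to obtain $\tilde h_{\alpha,\beta}$ directly as the conditional mean $\mathbb{E}[\tilde X\mid\tilde W=\tilde w]$, which is the MMSE estimator by the orthogonality principle; it is well defined and measurable because the joint density of $(\tilde X,\tilde W)$ is everywhere positive with finite moments. With $\tilde X,\tilde Z$ independent standard normals and $\tilde g(\tilde x)=\alpha\tilde x+\beta\operatorname{sign}(\tilde x)$ (the reduction of the general indicator in $\bar{\mathcal{G}}$ to $\operatorname{sign}$ under the balanced-binary symmetry being already noted in the text), write the conditional mean as the ratio $\int_{\mathbb R}\tilde x\,\phi(\tilde x)\,\phi(\tilde w-\tilde g(\tilde x))\,d\tilde x \big/ \int_{\mathbb R}\phi(\tilde x)\,\phi(\tilde w-\tilde g(\tilde x))\,d\tilde x$ and evaluate the two integrals.

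First I would split each integral at the origin, since on each half-line $\tilde g$ is affine, $\tilde g(\tilde x)=\alpha\tilde x\pm\beta$, and complete the square in the exponent $\tilde x^{2}+(\tilde w-\alpha\tilde x\mp\beta)^{2}$. On the half-line $\pm\tilde x>0$ this exhibits a Gaussian in $\tilde x$ of precision $1+\alpha^{2}$ centred at $\mu_{\pm}=\alpha(\tilde w\mp\beta)/(1+\alpha^{2})$, times the $\tilde x$-independent factor $\exp[-(\tilde w\mp\beta)^{2}/(2(1+\alpha^{2}))]$. The half-line integrals of $\exp[-(1+\alpha^{2})(\tilde x-\mu_{\pm})^{2}/2]$ are $Q$-functions of argument $\mp\alpha(\tilde w\mp\beta)/\sqrt{\alpha^{2}+1}$; combining each with its prefactor and using $\phi_{\mathrm{SN}}(\xi;\lambda)=2\phi(\xi)Q(-\lambda\xi)$ repackages the denominator as a constant multiple of $S_{-}(\tilde w)+S_{+}(\tilde w)$.

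For the numerator I would use the truncated-Gaussian first-moment identity: $\int_{\pm\tilde x>0}\tilde x\,\exp[-(1+\alpha^{2})(\tilde x-\mu_{\pm})^{2}/2]\,d\tilde x$ equals $\mu_{\pm}$ times the corresponding $Q$-integral plus a boundary term $\pm(1+\alpha^{2})^{-1}\exp[-(1+\alpha^{2})\mu_{\pm}^{2}/2]$ coming from the antiderivative at $\tilde x=0$, the two branches entering with opposite signs. The mean part, after the same skew-normal repackaging, yields the affine-in-$\tilde w$ summand $\tfrac{\alpha}{\alpha^{2}+1}[(\tilde w-\beta)S_{+}(\tilde w)+(\tilde w+\beta)S_{-}(\tilde w)]/[S_{-}(\tilde w)+S_{+}(\tilde w)]$. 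In the boundary part the $\tilde w$-dependent prefactor exactly cancels the Gaussian core, collapsing the exponent to $-(\tilde w\mp\beta)^{2}/2$, so the two branches give $\exp[-(\tilde w-\beta)^{2}/2]-\exp[-(\tilde w+\beta)^{2}/2]=2\exp[-(\tilde w^{2}+\beta^{2})/2]\sinh(\tilde w\beta)$; dividing by the denominator produces the $\sinh$ summand with prefactor $2/(\pi\sqrt{\alpha^{2}+1})$. Adding the two pieces gives the claimed formula; as sanity checks, $\beta=0$ reduces it to the linear MMSE estimator $\tfrac{\alpha}{\alpha^{2}+1}\tilde w$ and $\alpha=0$ to $\sqrt{2/\pi}\,\tanh(\tilde w\beta)$, the soft estimator for analog antipodal signaling.

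The main obstacle is constant bookkeeping: one must carry the $\sqrt{2\pi}$, factor-of-two, and $(1+\alpha^{2})^{1/2}$ factors consistently through the completing-the-square step, the half-line Gaussian and first-moment integrals, and the skew-normal normalization so that everything cancels cleanly and the arguments align with the definitions of $S_{\pm}$ and $\phi_{\mathrm{SN}}$; the single most delicate point is verifying that the $\tilde w$-dependent prefactor from completing the square exactly annihilates the Gaussian core of the boundary term to leave the compact factor $\exp[-(\tilde w^{2}+\beta^{2})/2]$.
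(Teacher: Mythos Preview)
Your proposal is correct and follows essentially the same approach as the paper's proof: both start from the conditional-mean ratio, split the integrals at the origin where $\tilde g$ changes branch, evaluate the resulting half-line Gaussian integrals (the paper cites Owen's tabulated antiderivatives while you complete the square and use the truncated first-moment identity directly), and then repackage via the skew-normal density to isolate the affine-in-$\tilde w$ term and the $\sinh$ boundary term. The only difference is presentational, and your sanity checks at $\beta=0$ and $\alpha=0$ coincide with the paper's stated corner cases.
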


The two notable corner cases are: \textit{i}) $\beta = 0$ (linear encoding, optimal for reconstruction) resulting in the well-known optimal decoder $h_{\alpha, 0}(\tilde{w}) = \alpha \tilde{w} / (1 + \alpha^2)$; \textit{ii}) $\alpha = 0$ (BPSK modulation, optimal for discrete information transmission) leading to $h_{0, \beta}(\tilde{w}) = \sqrt{2 / \pi} \tanh \left( \beta \tilde{w} \right)$. 

Lemma~\ref{lemma: decoder} allows us to compute the MMSE. Although the general expression is not tractable analytically, it is possible to derive MMSE for the two corner cases $\beta = 0$ and $\alpha = 0$:
\begin{align}
\label{eq: mmse linear}
   &\mathrm{MMSE}|_{\beta=0}= \left(1 +\alpha^2 \right) ^{-1}, \\
\label{eq: mmse nonlinear}
    &\mathrm{MMSE}|_{\alpha=0} \lesssim \frac{\pi-2}{\pi} 
    + \frac{8}{\pi}\Phi_{\mathrm{SN}}\!\left(
\frac{-\beta^2\sqrt{\pi}}{\sqrt{2+\pi\beta^2}};\;
\frac{1}{\sqrt{1+\pi\beta^2}}
\right). 
\end{align}
The expression in \eqref{eq: mmse linear} is a well-known result, while \eqref{eq: mmse nonlinear} represents a tight upper bound to the true MMSE, obtained by replacing the optimal decoder with
an $\operatorname{erf}$-based surrogate exploiting the approximation $\tanh(\xi) \simeq \operatorname{erf}\left( \xi \sqrt{\pi} / 2 \right)$, with $\operatorname{erf}(\xi) \triangleq 2Q(-\xi\sqrt{2}) - 1$ denoting the error function.

The optimal detector and its performance in terms of the risk $ \mathcal{R}( \gamma ) \triangleq \Pr \{C \neq \hat{C}\}$ are characterized by the following lemma.
\begin{lemma}[Classifier]
    \label{lemma: classifier}
    With the encoder $g \in \mathcal{G}$, the optimal Bayesian binary classifier is $\gamma_{\alpha,\beta}(\tilde w) = \mathbf{1}\left\{ \tilde w > 0 \right\}$ characterized by the risk
    \begin{equation}
        \label{eq: risk classifier}
        \mathcal{R}\left( \gamma_{\alpha, \beta} \right) = \Phi_\mathrm{SN}\left( - \beta / \sqrt{\alpha^2 + 1};  \; \alpha \right).
    \end{equation}
\end{lemma}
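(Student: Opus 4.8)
The plan is to argue directly on the conditional laws of the channel output and to reuse the skew-normal identities behind Lemma~\ref{lemma: decoder}. Since the classifier observes only $\tilde w$ and the two classes are equiprobable, the risk-minimising rule is the maximum-likelihood test $\hat C = \mathbf{1}\{f_{\tilde w\mid C=1}(\tilde w) > f_{\tilde w\mid C=0}(\tilde w)\}$. Conditioned on $C=1$ the source $\tilde x$ is half-normal, so $\tilde w = \alpha\tilde x + \beta + \tilde z$ is $\beta$ plus the sum of an $\alpha$-scaled half-normal and an independent standard normal; dividing by $\sqrt{\alpha^2+1}$ this sum is exactly a skew-normal of shape $\alpha$, whence $f_{\tilde w\mid C=1}(w) = \frac{1}{\sqrt{\alpha^2+1}}\,\phi_{\mathrm{SN}}\!\big((w-\beta)/\sqrt{\alpha^2+1};\,\alpha\big) = S_+(w)/\sqrt{\alpha^2+1}$, and by the reflection $\tilde x\mapsto-\tilde x$, $f_{\tilde w\mid C=0}(w) = S_-(w)/\sqrt{\alpha^2+1}$. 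These are the same $S_\pm$ appearing in Lemma~\ref{lemma: decoder}, so this step can simply cite the density computation carried out in Appendix~A.

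The core step is to show the likelihood test reduces to the sign test. From $S_-(w) = S_+(-w)$ (a consequence of the same reflection, which maps the $C=1$ model to the $C=0$ model with $\tilde w\mapsto-\tilde w$), the test is $S_+(w) \gtrless S_+(-w)$, so it suffices to analyse the sign of $\ell(w) \triangleq \log S_+(w) - \log S_+(-w)$. Writing $s = \sqrt{\alpha^2+1}$ and using $\phi_{\mathrm{SN}}(\xi;\alpha) = 2\phi(\xi)\,Q(-\alpha\xi)$, the Gaussian factors contribute $2w\beta/s^2$ and the $Q$-factors contribute $\log Q(-\alpha(w-\beta)/s) - \log Q(\alpha(w+\beta)/s)$; for $w>0$ and $\alpha,\beta\ge0$ the first term is $\ge0$ because $\beta\ge0$, and the second is $\ge0$ because $-\alpha(w-\beta)/s \le \alpha(w+\beta)/s$ and $Q$ is decreasing, with at least one of the two strict unless $\alpha=\beta=0$. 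Hence $\ell(w)>0$ for $w>0$ whenever $(\alpha,\beta)\neq(0,0)$, $\ell$ is odd so $\ell(w)<0$ for $w<0$, and $\ell$ vanishes only at $w=0$; by the paper's zero-measure-ties assumption the unique Bayes classifier is $\gamma_{\alpha,\beta}(\tilde w)=\mathbf{1}\{\tilde w>0\}$.

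Finally, the reflection symmetry makes the two conditional error probabilities coincide, so $\mathcal{R}(\gamma_{\alpha,\beta}) = \Pr\{\tilde w\le0\mid C=1\} = \int_{-\infty}^{0} \frac{1}{s}\,\phi_{\mathrm{SN}}\big((w-\beta)/s;\alpha\big)\,dw$; the substitution $\xi = (w-\beta)/s$ turns this into $\int_{-\infty}^{-\beta/s}\phi_{\mathrm{SN}}(\xi;\alpha)\,d\xi = \Phi_{\mathrm{SN}}(-\beta/\sqrt{\alpha^2+1};\alpha)$, the claimed value. I expect the only real obstacle to be the middle step --- certifying that the log-likelihood ratio changes sign exactly once, at the origin --- but the explicit $\phi\cdot Q$ form of $S_\pm$ makes the required monotonicity elementary, and the degenerate cases $\alpha=0$ (BPSK detection, giving $Q(\beta)$) and $\beta=0$ (purely linear encoder) are checked directly and are consistent with \eqref{eq: risk classifier}.
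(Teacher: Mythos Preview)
Your argument is correct and takes a genuinely different route from the paper. The paper simply asserts the sign rule ``by symmetry'' and then computes the risk by conditioning on $\tilde X$: it writes $\mathcal{R}=2\int_{-\infty}^{0}\Phi(\alpha t-\beta)\,\phi(t)\,dt$ and invokes Owen's identity 10,010.5 to recognise the result as $\Phi_{\mathrm{SN}}(-\beta/\sqrt{1+\alpha^2};\alpha)$. You instead condition on the class, identify the two class-conditional laws of $\tilde w$ as shifted skew-normals (exactly the $S_\pm$ already computed in the proof of Lemma~\ref{lemma: decoder}), and then (i) justify the sign rule by an explicit log-likelihood-ratio argument rather than a bare appeal to symmetry, and (ii) read off the risk directly as the skew-normal CDF at $-\beta/\sqrt{1+\alpha^2}$ without touching Owen's tables. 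Your approach is more self-contained and makes the optimality of the threshold at zero a theorem rather than an assertion; the paper's approach is shorter and purely computational. Both land at \eqref{eq: risk classifier}, and your corner-case checks match the paper's $\operatorname{arccot}(\alpha)/\pi$ and $Q(\beta)$.
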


Again, for the two corner cases $\beta = 0$ and $\alpha = 0$, \eqref{eq: risk classifier} yields $\mathcal{R}( \gamma_{\alpha, 0}) = \operatorname{arccot}(\alpha) / \pi$ and $\mathcal{R}( \gamma_{0, \beta }) = \operatorname{Q}(\beta)$, respectively.

Lemma~\ref{lemma: classifier}, directly leads to the following theorem:
\begin{theorem}[DCP solution]
\label{thm:DCP-solution}
For $\operatorname{SNR}>0$ and a target error $P_e$ in the non-trivial Pareto range
\begin{equation}
\label{eq: pareto range Pe}
Q\!\left(\sqrt{\operatorname{SNR}}\right)\ \leq \ P_e\ \leq \ \operatorname{arccot}\!\left(\sqrt{\operatorname{SNR}}\right) / \pi,
\end{equation}
there exists a unique optimal solution $(\alpha^\star,\beta^\star)\in\mathbb{R}_+^2$ to \eqref{eq:DCP} with $\alpha^\star$ the solution of the non-linear equation
\begin{equation}
\label{eq: design}
    \Phi_\mathrm{SN}\left(  \beta^\star  / \sqrt{  1 + ({\alpha^\star})^2 }; \; \alpha^\star \right)  = P_e
\end{equation}
with $
\beta^\star = -\alpha^\star\sqrt{ 2 /\pi} + \sqrt{\operatorname{SNR}-({\alpha^\star})^2\!(1- 2 /\pi )}$.
\end{theorem}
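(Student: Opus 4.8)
\emph{Proof outline.} The idea is to reduce \eqref{eq:DCP}, restricted to the family $\bar{\mathcal G}$, to a one-parameter optimization along the boundary of the power constraint, and to read off the optimum from two monotonicity facts. \emph{First, the power constraint is active.} Replacing $(\alpha,\beta)$ by $(t\alpha,t\beta)$ with $t>1$ is the same as keeping the encoder fixed and dividing the channel-noise variance by $t^{2}$, since $\tilde w/t=\alpha\tilde x+\beta\operatorname{sign}(\tilde x)+\tilde z/t$. A less noisy AWGN observation is strictly more informative about $\tilde X$, hence strictly lowers the MMSE; and, writing the classification risk of Lemma~\ref{lemma: classifier} as $\mathcal R(\alpha,\beta)=\mathbb E\!\left[Q(\alpha\tilde X_{+}+\beta)\right]$ with $\tilde X_{+}:=|\tilde X|$ half-normal, it also strictly lowers the risk because $\alpha\tilde X_{+}+\beta>0$ and $Q$ is decreasing. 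So no feasible point with slack power can be optimal (feasibility on \eqref{eq: pareto range Pe} excludes the degenerate point $(0,0)$), and at the optimum $\alpha^{2}+\beta^{2}+2\alpha\beta\sqrt{2/\pi}=\operatorname{SNR}$; solving for the nonnegative root in the first quadrant gives precisely $\beta=\beta(\alpha):=-\alpha\sqrt{2/\pi}+\sqrt{\operatorname{SNR}-\alpha^{2}(1-2/\pi)}$ for $\alpha\in[0,\sqrt{\operatorname{SNR}}]$, the constraint $\beta\ge0$ being exactly $\alpha\le\sqrt{\operatorname{SNR}}$; this is the stated $\beta^{\star}$. It then remains to minimize $\widehat{\mathrm{MMSE}}(\alpha):=\mathrm{MMSE}(\alpha,\beta(\alpha))$ over $\{\alpha\in[0,\sqrt{\operatorname{SNR}}]:\ \mathcal R(\alpha,\beta(\alpha))\le P_{e}\}$.

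\emph{Second, the classification risk is strictly increasing along the boundary.} Write $\tilde X_{+}=m+\sigma_{0}\epsilon$ with $m=\sqrt{2/\pi}$ and $\sigma_{0}^{2}=1-2/\pi$, so that $\mathbb E[\epsilon]=0$, $\mathbb E[\epsilon^{2}]=1$, $\operatorname{supp}\epsilon=[-m/\sigma_{0},\infty)$ and, crucially, $m^{2}+\sigma_{0}^{2}=\mathbb E[\tilde X^{2}]=1$. On the boundary $\alpha\tilde X_{+}+\beta=u+v\epsilon$ with $v=\alpha\sigma_{0}$ and $u=\alpha m+\beta=\sqrt{\operatorname{SNR}-v^{2}}$, so $\widehat{\mathcal R}(\alpha):=\mathcal R(\alpha,\beta(\alpha))=\mathbb E[Q(u+v\epsilon)]$ and, using $du/dv=-v/u$, $\tfrac{d}{dv}\widehat{\mathcal R}=\mathbb E[\phi(u+v\epsilon)(v/u-\epsilon)]$. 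The map $\epsilon\mapsto\phi(u+v\epsilon)$ peaks at $\epsilon=-u/v$, and $-u/v\le-m/\sigma_{0}$ is equivalent (via $m^{2}+\sigma_{0}^{2}=1$) to $v^{2}\le\sigma_{0}^{2}\operatorname{SNR}$, i.e.\ to $\alpha\le\sqrt{\operatorname{SNR}}$; hence on the whole range $\phi(u+v\epsilon)$ and $v/u-\epsilon$ are both decreasing in $\epsilon$, and the Chebyshev (Harris) correlation inequality gives $\tfrac{d}{dv}\widehat{\mathcal R}\ge\mathbb E[\phi(u+v\epsilon)]\,(v/u)>0$ for $\alpha\in(0,\sqrt{\operatorname{SNR}})$. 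Thus $\widehat{\mathcal R}$ increases strictly from $\widehat{\mathcal R}(0)=Q(\sqrt{\operatorname{SNR}})$ to $\widehat{\mathcal R}(\sqrt{\operatorname{SNR}})=\operatorname{arccot}(\sqrt{\operatorname{SNR}})/\pi$; in particular \eqref{eq: pareto range Pe} is a nonempty interval, and for $P_{e}$ in its interior there is a unique $\alpha^{\star}\in(0,\sqrt{\operatorname{SNR}})$ with $\widehat{\mathcal R}(\alpha^{\star})=P_{e}$, which is \eqref{eq: design}; the classification-feasible part of the boundary is then exactly $\{\alpha\le\alpha^{\star}\}$.

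\emph{Third, the MMSE is strictly decreasing along the boundary, which is the main obstacle.} Granting that $\widehat{\mathrm{MMSE}}(\alpha)$ is strictly decreasing on $[0,\sqrt{\operatorname{SNR}}]$, its minimum over $\{\alpha\le\alpha^{\star}\}$ is attained uniquely at $\alpha^{\star}$; combined with the first step (any slack-power feasible point is beaten by an on-boundary feasible point, hence by $(\alpha^{\star},\beta(\alpha^{\star}))$), this makes $(\alpha^{\star},\beta(\alpha^{\star}))$ the unique solution of \eqref{eq:DCP} and forces the classification constraint to bind there, so $\alpha^{\star}$ solves \eqref{eq: design} with $\beta^{\star}=\beta(\alpha^{\star})$, which is $>0$ strictly inside the Pareto range and $=0$ at its upper endpoint. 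To prove the monotonicity I would use the same decomposition: conditioning on $S:=\operatorname{sign}(\tilde X)$ one has $\tilde X=S(m+\sigma_{0}\epsilon)$ and $\tilde W=S(u+v\epsilon)+\tilde Z$ with $u^{2}+v^{2}=\operatorname{SNR}$ on the boundary; writing $\mathrm{MMSE}=1-\mathbb E\big[\mathbb E[\tilde X\mid\tilde W]^{2}\big]$ through the explicit estimator of Lemma~\ref{lemma: decoder}, differentiating in $\alpha$ under $u\,du+v\,dv=0$, and reducing to the sign condition $u\,\partial_{v}\mathrm{MMSE}<v\,\partial_{u}\mathrm{MMSE}$, which says that on the power frontier shifting power from the sign (``BPSK'') component towards the linear component always improves reconstruction. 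Establishing this sign is the technical heart; I expect it to follow from log-concavity/monotonicity properties of the truncated-normal posterior together with $m^{2}+\sigma_{0}^{2}=1$, with the closed forms \eqref{eq: mmse linear} and \eqref{eq: mmse nonlinear} (evaluating $\widehat{\mathrm{MMSE}}$ at the endpoints $\alpha=\sqrt{\operatorname{SNR}}$ and $\alpha=0$) as consistency checks. Finally, outside \eqref{eq: pareto range Pe} the problem is degenerate: for $P_{e}$ above the upper endpoint the classification constraint is slack and the optimum is the reconstruction-optimal point $(\sqrt{\operatorname{SNR}},0)$, while for $P_{e}$ below the lower endpoint the second step shows infeasibility.
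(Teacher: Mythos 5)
Your proposal follows the same overall strategy as the paper's proof: show that both the power and the classification constraints are active at the optimum, parametrize the power frontier by $\beta(\alpha)=-\alpha\sqrt{2/\pi}+\sqrt{\operatorname{SNR}-\alpha^{2}(1-2/\pi)}$, and get existence/uniqueness from strict monotonicity of the risk along that frontier, which also produces the Pareto range \eqref{eq: pareto range Pe} from the two endpoints. Where you genuinely differ is in how the key monotonicity is established. The paper differentiates $\alpha\mapsto\Phi_\mathrm{SN}\left[u(\alpha);\alpha\right]$ directly, via the Owen-$T$ integral representation, the Leibniz rule, an AM--GM bound on $u'(\alpha)$, and a Mills-ratio-type bound on $Q$. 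Your route --- writing the risk as $\mathbb{E}\left[Q(u+v\epsilon)\right]$ with $|\tilde X|=m+\sigma_{0}\epsilon$, $u=\alpha m+\beta$, $v=\alpha\sigma_{0}$, observing that the power constraint is exactly $u^{2}+v^{2}=\operatorname{SNR}$, and applying the Chebyshev/Harris correlation inequality to the two decreasing functions $\phi(u+v\epsilon)$ and $v/u-\epsilon$ --- is correct (your support check $-u/v\le-m/\sigma_{0}\iff\alpha\le\sqrt{\operatorname{SNR}}$ is exactly what makes $\phi(u+v\epsilon)$ monotone on the support), and it is arguably cleaner and more robust than the paper's explicit bounding chain. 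Likewise, your noise-scaling argument for activeness of the power constraint is more careful than the paper's one-sentence justification. One cosmetic point: you implicitly read \eqref{eq: design} with the argument $-\beta^{\star}/\sqrt{1+(\alpha^{\star})^{2}}$, consistent with Lemma~\ref{lemma: classifier} and with the paper's own derived equation; the sign as printed in the theorem statement appears to be a typo.

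The gap is the one you flag yourself: the strict decrease of $\widehat{\mathrm{MMSE}}(\alpha)=\mathrm{MMSE}(\alpha,\beta(\alpha))$ along the power frontier, which you need to conclude that the classification constraint binds and hence that the optimum sits at $\alpha^{\star}$ rather than at some $\alpha<\alpha^{\star}$, is only assumed and sketched (``granting that\ldots'', ``I expect it to follow from\ldots''), not proved. You should know, however, that the paper does not prove it either: activeness of the $P_e$ constraint is asserted in a single sentence (``otherwise the budget on $P$ could be used to improve MSE by reducing $\beta$ and increasing $\alpha$''), which presupposes precisely the monotonicity you leave open; the only step the paper actually proves in detail is the risk monotonicity, which you prove by a different and simpler argument. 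So your write-up matches the paper's effective level of rigor and is more explicit about where the missing ingredient lies; closing it would require showing that, at fixed transmit power, shifting power from the sign component to the linear component strictly reduces the MMSE, e.g.\ along the lines you sketch through the conditional-mean representation of Lemma~\ref{lemma: decoder}, and this remains unproven in both your proposal and the paper.
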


Theorem~\ref{thm:DCP-solution} provides the design rule of the system, which boils down to the solution of \eqref{eq: design}.
It is worth noting that Theorem~\ref{thm:DCP-solution} does not rely on Lemma~\ref{lemma: decoder} and does not require MSE computation. Nonetheless, similarly to \eqref{eq: pareto range Pe}, one can set $\alpha=\sqrt{\operatorname{SNR}}$ in \eqref{eq: mmse linear} and  $\beta=\sqrt{\operatorname{SNR}}$ in \eqref{eq: mmse nonlinear} to assess the Pareto range in terms of MSE.

\section{Anomaly Detection}
\label{sec: anomaly detection}

In contrast to binary classification, AD assumes that only the distribution of the normal class is known a priori. This leads to specialization of a one-class classification framework.

\subsection{Preliminaries}

According to the fundamental concentration assumption of AD, the normal signal ({\rm ok}) lies within a bounded high-density region. Therefore, such a region is modeled as the smallest density level set that has a probability of at least $\theta $ under $f_X$ \cite{Ruff_2021ProcIEEE}. Given $X \sim \mathcal N(0, \sigma_X^2)$, the high-density region is the interval $[ -T, T ]$ such that $\theta = \operatorname{Pr}\{|X| \leq T \} = 1 - 2Q (T/\sigma_X)$, or, equivalently, the normal samples $x^{\ok}$ are distributed as a truncated Gaussian $
f_{X^{\ok}}(x) = \phi\left( x / \sigma_X \right) / (\sigma_X  \theta) $ for $ x \in [-T,T]$.
In normalized variables $\tilde{x} = x/\sigma_X$,  and $t = T/\sigma_X$, this becomes 
$f_{\tilde{X}^{\ok}}(\tilde{x}) = 
 \phi\left(\tilde{x}\right) / \theta $ for $ \tilde x \in [-t, t]$ .

With this model of the normal signal, a natural choice is to draw anomalies ({\rm ko}) from the Gaussian tails
$f^{\tails}_{\tilde{X}^{\ko}}(\tilde{x}) = 
\phi\left(\tilde{x}\right) / (1-\theta) $ for $|\tilde{x}|> t$.
To model the presence of rare, non-Gaussian anomalies, we also introduce an unknown contamination distribution supported in the tails, $f^{\unknown}_{\tilde{X}^{\ko}}(\tilde{x})$ with $\mathrm{supp}\,f^{\unknown}_{\tilde{X}^{\ko}}\subset\{|\tilde{x}|>t\}$. 
The overall signal $\tilde X$ is a mixture of the form
\begin{align}
f_{\tilde{X}}(\tilde{x}) &= \left(1 - \epsilon \right) \left[\theta f_{\tilde{X}^{\ok}}(\tilde{x}) + \left(1 - \theta \right) f^{\tails}_{\tilde{X}^{\ko}}(\tilde{x}) \right] + \epsilon  f^{\unknown}_{\tilde{X}^{\ko}}(\tilde{x})\nonumber\\
& = \pi^{\ok} f_{\tilde{X}^{\ok}}(\tilde{x}) + \pi^{\ko} f_{\tilde{X}^{\ko}}(\tilde{x}), 
\end{align}
with $\epsilon \ll 1$, $\pi^{\ok} = \theta (1 - \epsilon)$, $\pi^{\ko}=1-\pi^{\ok}$ and anomaly-class conditional PDF
\begin{equation}
    \label{eq: anomaly model}
    f_{\tilde{X}^{\ko}}(\tilde{x}) = \left( 1 - \tau \right)  f^{\tails}_{\tilde{X}^{\ko}}(\tilde{x}) + \tau f^{\unknown}_{\tilde{X}^{\ko}}(\tilde{x}),
\end{equation}
where $\tau = \epsilon / \pi^{\ko}$.
From a taxonomy perspective, \eqref{eq: anomaly model} captures both samples that lie in the low-probability region under the law that describes normality (outliers) and observations from a distinct distribution (anomalies).

The transmitter encodes the signal it observes as
\begin{equation}
\label{eq: AD encoding}
\tilde{g}(\tilde{x}) = 
\begin{cases}
\alpha \tilde{x}, & |\tilde{x}| \leq t \;\;\; (\ok), \\
\beta \tilde{x} + \delta \cdot \operatorname{sign}(\tilde{x}), & |\tilde{x}|>t \;\;\; (\ko),
\end{cases}
\end{equation}
where $\alpha = A\sigma_X/\sigma_Z$, $\beta = B/\sigma_Z$, and $\delta = D/\sigma_Z$. When $\beta=0$, anomalies are encoded to maximize discriminability from the normal signal. A nonzero $\beta$ allows to convey information regarding the anomaly beyond its sign.

With the encoding in \eqref{eq: AD encoding}, if we neglect the unknown distribution contribution, the normalized power constraint reads as 
\begin{equation}
    \label{eq: power AD}
    \mathbb{E} \left[ \tilde g^2(x) \right] = \theta \operatorname{SNR}^{\ok} + (1-\theta)\operatorname{SNR}^{\ko} \leq \operatorname{SNR},
\end{equation}
where 
\begin{align}
    \operatorname{SNR}^{\ok} & = 
    \alpha^2 \left[ 1 - 2 t  \phi \left(t \right) / \theta \right], \\
    \operatorname{SNR}^{\ko} & = 
    2 \left[ \beta(t \beta+2\delta)  \phi \left(t \right)
    + (\beta^2+\delta^2) Q(t) \right] / \left(  1-\theta \right).
\end{align}
Proofs of this and the results presented in the following subsection are provided in Appendix~B.

\begin{figure*}[t!]
\centering
  \subfloat[]{\includegraphics[]{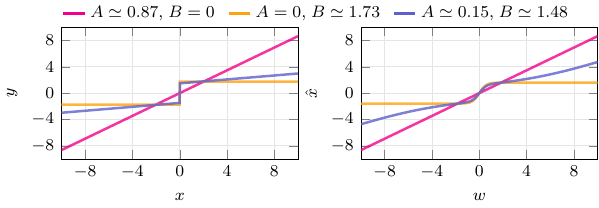} \label{fig: classification encodings}}
  \hspace{1cm}
  \subfloat[]{\includegraphics[]{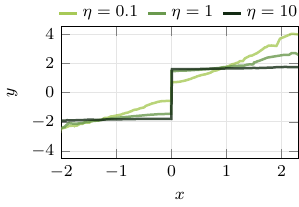} \label{fig: MLGD classification encodings}}
    \caption{(a) Examples of encoding (left) and decoding (right) mappings in the case of binary classification with $\sigma_X = 2$, $\sigma_Z = 1$, and power constraint $P = 3$, for three different feasible combinations of $A$ and $B$. 
  (b) Meta-learned encoding mappings in the case of binary classification with $\sigma_X = 1$, $\sigma_Z = 0.63$, and $P = 3$, for three different values of the classification loss weight $\eta$. The parameter $\eta$ is used as a Lagrange multiplier in the optimization objective to balance classification performance and reconstruction fidelity.}
  \label{fig:combined_classification_figures}
\end{figure*}

\captionsetup[subfigure]{width=0.97\columnwidth}
\begin{figure*}
\centering
  \subfloat[DCP Pareto fronts in case of binary classification for different values of $P$ for $\sigma_X = 2$ and  $\sigma_Z = 0.5$ (left),  $\sigma_Z = 1$ (right).]{\includegraphics[]{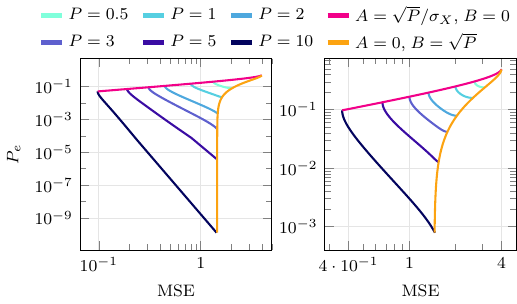} \label{fig: classification pareto}}
  \hfill
  \subfloat[DCP curves in case of AD for different values of $P$ for $\sigma_X = 2$, $T=2 \sigma_X$, $\sigma_Z = 0.5$ (left),  $\sigma_Z = 1$ (right) where $A_{\max} = \sqrt{P} / (\sigma_X\theta - 2 T  \phi(T/\sigma_X))$ and $D_{\max} = \sqrt{P} / (1 - \theta)$.]{\includegraphics[]{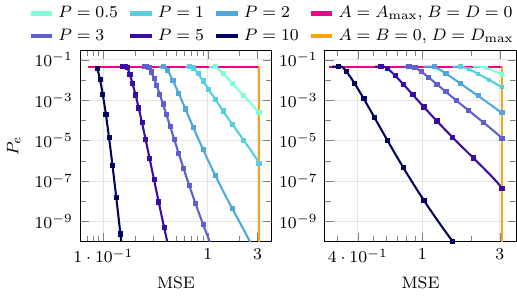} \label{fig: ad pareto}}
  \caption{Distortion-classification-power trade-off in case of binary classification (a) and anomaly detection (b).}
  \label{fig:combined_pareto_figures}
\end{figure*}

\subsection{Main results}
We first derive the optimal decoder for the normal signal.
\begin{lemma}[Decoder \ok]
    \label{lemma: decoder ad}
     For $x^\ok \sim f_{\tilde{X}^{\ok}}(\tilde{x})$ encoded with $g(x^\ok) = \alpha x^\ok$, the MMSE estimator decoder is
     \begin{align}
     \label{eq: decoder ad}
        &\tilde h_{\alpha}(\tilde w)=
        \frac{\alpha \tilde w}{\varsigma^2} - \frac{1}{\varsigma} \frac{\phi\left(\alpha \tilde w / \varsigma - t \varsigma  \right ) - \phi\left(\alpha \tilde w / \varsigma +  t \varsigma  \right )}{Q\left( \alpha \tilde w / \varsigma - t\varsigma \right) - Q\left( \alpha \tilde w / \varsigma + t\varsigma  \right)} ,
        \end{align}
    where $\tilde{w} = w/\sigma_Z$ and $\varsigma = \sqrt{1 + \alpha^2}$.
\end{lemma}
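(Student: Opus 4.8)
The plan is to compute the posterior mean $\tilde h_\alpha(\tilde w)=\mathbb{E}\!\left[\tilde X\mid \tilde W=\tilde w\right]$ directly from Bayes' rule, where $\tilde W=\alpha\tilde X+\tilde Z$ with $\tilde Z\sim\mathcal N(0,1)$ independent of $\tilde X$ and $\tilde X$ has the truncated-normal density $f_{\tilde X^{\ok}}(\tilde x)=\phi(\tilde x)/\theta$ on $[-t,t]$. The conditional density satisfies $f_{\tilde X\mid\tilde W}(\tilde x\mid\tilde w)\propto f_{\tilde X^{\ok}}(\tilde x)\,\phi(\tilde w-\alpha\tilde x)\propto\phi(\tilde x)\,\phi(\tilde w-\alpha\tilde x)\,\mathbf 1_{[-t,t]}(\tilde x)$, the constant $1/\theta$ and the overall normalization being irrelevant to the conditional law.

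First I would multiply the two Gaussian factors and complete the square in $\tilde x$, using $\tilde x^2+(\tilde w-\alpha\tilde x)^2=\varsigma^2\bigl(\tilde x-\alpha\tilde w/\varsigma^2\bigr)^2+\tilde w^2/\varsigma^2$ with $\varsigma^2=1+\alpha^2$. This shows that, conditioned on $\tilde W=\tilde w$, $\tilde X$ is distributed as $\mathcal N(\mu,s^2)$ with $\mu=\alpha\tilde w/\varsigma^2$ and $s=1/\varsigma$, truncated to $[-t,t]$.

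Next I would invoke the standard formula for the mean of a Gaussian truncated to $[a,b]$, namely $\mathbb{E}[\tilde X\mid\tilde w]=\mu+s\,\dfrac{\phi\bigl((a-\mu)/s\bigr)-\phi\bigl((b-\mu)/s\bigr)}{\Phi\bigl((b-\mu)/s\bigr)-\Phi\bigl((a-\mu)/s\bigr)}$, with $a=-t$, $b=t$. Then $(b-\mu)/s=t\varsigma-\alpha\tilde w/\varsigma$ and $(a-\mu)/s=-t\varsigma-\alpha\tilde w/\varsigma$; using the evenness of $\phi$ and the identity $\Phi(\xi)=Q(-\xi)$ to re-express everything through the arguments $\alpha\tilde w/\varsigma\mp t\varsigma$ collapses the expression to exactly \eqref{eq: decoder ad}.

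The derivation is essentially bookkeeping, so the only delicate point is keeping track of the sign/reflection conventions when converting between $\Phi$, $Q$, and the even $\phi$, so that the numerator and denominator of the correction term appear with the signs displayed in \eqref{eq: decoder ad}. If a self-contained argument is preferred over quoting the truncated-normal mean, I would instead write $\mathbb{E}[\tilde X\mid\tilde w]=\int_{-t}^{t}\tilde x\,\phi\bigl(\varsigma(\tilde x-\mu)\bigr)\,d\tilde x\big/\int_{-t}^{t}\phi\bigl(\varsigma(\tilde x-\mu)\bigr)\,d\tilde x$, substitute $u=\varsigma(\tilde x-\mu)$, split $\tilde x=\mu+u/\varsigma$, and evaluate the $u\,\phi(u)$ contribution via $\int u\,\phi(u)\,du=-\phi(u)$; both the numerator antiderivative and the denominator (a difference of $Q$ values) then reproduce the claimed closed form.
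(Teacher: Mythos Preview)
Your proposal is correct. Both you and the paper arrive at \eqref{eq: decoder ad}, but by different routes. The paper works directly from the integral representation $\tilde h_\alpha(\tilde w)=\bigl[\int_{-t}^{t}\eta\,\phi(\alpha\eta-\tilde w)\phi(\eta)\,d\eta\bigr]\big/\bigl[\int_{-t}^{t}\phi(\alpha\eta-\tilde w)\phi(\eta)\,d\eta\bigr]$ and evaluates numerator and denominator by quoting Owen's tabulated antiderivatives (the same primitives $N$ and $D$ already computed in the proof of Lemma~\ref{lemma: decoder}, now evaluated at $\pm t$ rather than at $0,\pm\infty$), then simplifies. You instead complete the square once to recognise the posterior $\tilde X\mid\tilde W=\tilde w$ as a truncated $\mathcal N(\alpha\tilde w/\varsigma^2,1/\varsigma^2)$ on $[-t,t]$ and invoke the standard truncated-normal mean formula. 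Your approach is shorter and more conceptual, and it makes the structure of the correction term transparent; the paper's approach is more mechanical but is self-contained (no external formula quoted) and reuses machinery already in place from Appendix~A. The alternative you sketch at the end --- substituting $u=\varsigma(\tilde x-\mu)$ and integrating $u\,\phi(u)$ via $-\phi(u)$ --- is essentially an in-line derivation of the truncated-normal mean and is very close in spirit to what the paper does with Owen's identities.
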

A notable special case of Lemma~\ref{lemma: decoder ad} is for $t \to +\infty$: the second term in \eqref{eq: decoder ad} vanishes and we obtain the MMSE decoder for the Gaussian signal. 

The next results concern a log-likelihood-based detector \cite{Marchioni_TSMC2024}, which computes the anomaly score $S(\tilde w) = -\log{ f_{\tilde W^\ok}(\tilde w) } $, and its performance in terms of the false positive rate $\operatorname{FPR} \left( \psi \right) = \Pr \{ \hat{C} = 1 \mid C = 0 \} $ and the false negative rate $\operatorname{FNR} \left( \psi \right) = \Pr \{ \hat{C} = 0 \mid C = 1 \} $, given a detection threshold $\psi$.
\begin{lemma}[Detector]
    \label{lemma: detector}
    The log-likelihood score is a monotonically increasing transformation of $|\tilde w|$, so that the detection reduces to
        $\hat{C} = \gamma_{\alpha, \beta, \delta}(\tilde{w}) = \mathbf{1}\{ |\tilde{w}| > \psi \}$.
\end{lemma}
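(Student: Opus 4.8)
The plan is to exploit the symmetry of the setup and show that the anomaly score depends on $\tilde{w}$ only through $|\tilde{w}|$, from which the stated threshold form of the detector follows immediately via the Neyman--Pearson / Bayes optimal decision rule. First I would write down the channel output density under the normal hypothesis. Since $x^{\ok}$ is a truncated Gaussian on $[-t,t]$ encoded by $\tilde g(\tilde x) = \alpha \tilde x$ and then corrupted by standard Gaussian noise, the conditional density $f_{\tilde W^{\ok}}(\tilde w)$ is a convolution of a truncated centered Gaussian (scaled by $\alpha$) with $\phi$, giving, up to positive constants,
\begin{equation*}
f_{\tilde W^{\ok}}(\tilde w) \;\propto\; \int_{-t}^{t} \phi(\tilde x)\,\phi(\tilde w - \alpha \tilde x)\,d\tilde x,
\end{equation*}
which, after completing the square in $\tilde x$, reduces to a Gaussian factor in $\tilde w$ times a difference of $Q$-functions of the form $Q(\alpha\tilde w/\varsigma - t\varsigma) - Q(\alpha\tilde w/\varsigma + t\varsigma)$ with $\varsigma = \sqrt{1+\alpha^2}$ — exactly the denominator already appearing in Lemma~\ref{lemma: decoder ad}, so this computation is essentially done elsewhere in Appendix~B.

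The key step is then to observe that $f_{\tilde W^{\ok}}(\tilde w)$ is an even function of $\tilde w$: the Gaussian prefactor $\exp(-\tilde w^2/(2\varsigma^2))$ is even, and the bracket $Q(\alpha\tilde w/\varsigma - t\varsigma) - Q(\alpha\tilde w/\varsigma + t\varsigma)$ is also even in $\tilde w$ because swapping $\tilde w \mapsto -\tilde w$ swaps the two $Q$ terms and flips the overall sign twice (equivalently, using $Q(-\xi) = 1 - Q(\xi)$). Hence $S(\tilde w) = -\log f_{\tilde W^{\ok}}(\tilde w)$ is even and can be written as a function of $\tilde w^2$ (or $|\tilde w|$). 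The remaining task is to show this function is monotonically increasing in $|\tilde w|$: the $-\log$ of the Gaussian prefactor contributes $\tilde w^2/(2\varsigma^2) + \mathrm{const}$, which is increasing in $|\tilde w|$, while the bracketed term, a probability mass of a unit-variance Gaussian over the window $(\alpha\tilde w/\varsigma - t\varsigma,\ \alpha\tilde w/\varsigma + t\varsigma)$ of fixed width $2t\varsigma$ centered at $\alpha\tilde w/\varsigma$, is maximized when the window is centered at the origin (i.e.\ $\tilde w = 0$) and decreases as $|\tilde w|$ grows, so $-\log$ of it is also increasing in $|\tilde w|$; summing two increasing functions gives a monotonically increasing score. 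Therefore the level sets $\{S(\tilde w) \le \psi'\}$ are exactly intervals $\{|\tilde w| \le \psi\}$, and thresholding the score at any level is equivalent to thresholding $|\tilde w|$ at the corresponding $\psi$, which is the claimed form $\hat C = \mathbf{1}\{|\tilde w| > \psi\}$.

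The main obstacle I anticipate is the monotonicity of the bracketed term: one must verify rigorously that $\tilde w \mapsto Q(\alpha\tilde w/\varsigma - t\varsigma) - Q(\alpha\tilde w/\varsigma + t\varsigma)$ is strictly decreasing on $[0,\infty)$. This follows by differentiating with respect to $\tilde w$ and getting, up to the positive factor $\alpha/\varsigma$, the quantity $\phi(\alpha\tilde w/\varsigma + t\varsigma) - \phi(\alpha\tilde w/\varsigma - t\varsigma)$, which is negative for $\tilde w > 0$ precisely because $\phi$ is decreasing in $|\cdot|$ and $|\alpha\tilde w/\varsigma + t\varsigma| > |\alpha\tilde w/\varsigma - t\varsigma|$ when $\tilde w, \alpha, t > 0$; one should note the $\alpha \to 0$ degenerate case separately (the score is then constant in $\tilde w$, and the statement holds trivially or the detector is vacuous). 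A secondary point worth a sentence is the justification that the Bayes-optimal detector is a threshold rule on the score at all — this is immediate since the likelihood ratio for $C=0$ versus $C=1$ is monotone in $f_{\tilde W^{\ok}}$ (the ${\ko}$-conditional density is supported on, or concentrated in, the tails, making the optimal acceptance region for $C=0$ a superlevel set of $f_{\tilde W^{\ok}}$), so the two reductions chain together to give the stated interval decision.
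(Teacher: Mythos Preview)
Your proposal is correct and follows essentially the same route as the paper: both derive $f_{\tilde W^{\ok}}(\tilde w)\propto \phi(\tilde w/\varsigma)\bigl[Q(\alpha\tilde w/\varsigma - t\varsigma) - Q(\alpha\tilde w/\varsigma + t\varsigma)\bigr]$, use $Q(-\xi)=1-Q(\xi)$ to establish evenness, and then differentiate the bracket to obtain $(\alpha/\varsigma)\bigl[\phi(\alpha\tilde w/\varsigma + t\varsigma)-\phi(\alpha\tilde w/\varsigma - t\varsigma)\bigr]<0$ for $\tilde w>0$, concluding that $\bar S(\tilde w)=\tilde w^2/(2\varsigma^2)-\log q(\tilde w)$ is strictly increasing in $|\tilde w|$. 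Your added geometric ``sliding window'' intuition, the remark on the degenerate case $\alpha=0$, and the short justification that the Bayes rule is a threshold on $f_{\tilde W^{\ok}}$ are useful embellishments not in the paper, but the argument is otherwise the same.
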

Now, by relying on the bivariate standard normal cumulative distribution function with a correlation coefficient $\rho \in (-1, 1)$, which can be expressed as \cite[Identity 10,010.2]{Owen_1980} 
\begin{equation}
    \Phi_2(\xi, \omega; \rho) = \textstyle{\int^{\xi}_{-\infty} Q \left[ \left(\rho s - \omega \right) / \sqrt{1-\rho^2}  \right] \phi(s) ds},
\end{equation}
we can state the following lemma.
\begin{lemma}[Risk]
    \label{lemma: detector performance}
    The risk of the detector $\gamma_{\alpha, \beta, \delta}(\tilde{w})= 1\left\{ |\tilde{w}| > \psi \right\} $ is
     \begin{equation}
    \mathcal{R}(\gamma_{\alpha, \beta, \delta}) =  \pi^{\ok} \operatorname{FPR} \left( \psi \right) +  \pi^{\ko} \operatorname{FNR}\left( \psi \right),
    \end{equation}
    where 
    \begin{equation}
        \operatorname{FNR}\left( \psi \right) = \left( 1 - \tau \right)  \operatorname{FNR}^{\tails}\left( \psi \right) + \tau \operatorname{FNR}^{\unknown}\left( \psi \right) ,
    \end{equation}
    and we can explicate
    \begin{align}
        \operatorname{FPR} \left( \psi \right) &= 2 \Phi_2 \left(-\psi / \varsigma, t; -\alpha / \varsigma \right)/ \theta\\
        \notag
        & \quad \quad - 2 \Phi_2 \left( -\psi / \varsigma, -t; -\alpha / \varsigma \right) / \theta\\
        \label{eq: fnr tails}
        \operatorname{FNR}^{\tails} \left( \psi \right) &= 2 \Phi_2 \left(\kappa_+ / \vartheta, -t; -\beta / \vartheta \right) / (1-\theta)\\
        \notag
        & \quad \quad - 2 \Phi_2 \left( \kappa_- / \vartheta, -t ; -\beta / \vartheta \right) / (1-\theta),
    \end{align}
    with $\varsigma = \sqrt{1 + \alpha^2}$, $\vartheta = \sqrt{1 + \beta^2}$, $\kappa_\pm = \delta \pm \psi$. 

\end{lemma}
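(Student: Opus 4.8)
The plan is to reduce $\mathcal R$ to a Bayes risk, compute each conditional error probability as a Gaussian-weighted integral of a $Q$-function, and then recognize every such integral as a difference of bivariate normal CDFs through the Owen identity quoted above. Conditioning on the true class and using the priors $\pi^{\ok},\pi^{\ko}$ gives at once $\mathcal R(\gamma_{\alpha,\beta,\delta})=\pi^{\ok}\Pr\{\hat C=1\mid C=0\}+\pi^{\ko}\Pr\{\hat C=0\mid C=1\}=\pi^{\ok}\operatorname{FPR}(\psi)+\pi^{\ko}\operatorname{FNR}(\psi)$, where by Lemma~\ref{lemma: detector} the acceptance event is $\{\hat C=0\}=\{|\tilde W|\le\psi\}$. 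Since the anomaly-conditional law is itself the mixture $f_{\tilde X^{\ko}}=(1-\tau)f^{\tails}_{\tilde X^{\ko}}+\tau f^{\unknown}_{\tilde X^{\ko}}$ of \eqref{eq: anomaly model}, and the map $f\mapsto\Pr\{|\tilde W|\le\psi\}$ is affine in the source law (the channel noise being independent of $\tilde X$), one gets $\operatorname{FNR}(\psi)=(1-\tau)\operatorname{FNR}^{\tails}(\psi)+\tau\operatorname{FNR}^{\unknown}(\psi)$, the last term kept implicit because $f^{\unknown}$ is unspecified.

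For the false positive rate, under $C=0$ the variable $\tilde X^{\ok}$ has density $\phi(x)/\theta$ on $[-t,t]$, is encoded as $\alpha x$, and $\tilde W=\alpha x+\tilde Z$ with $\tilde Z\sim\mathcal N(0,1)$ independent, so $\operatorname{FPR}(\psi)=\theta^{-1}\int_{-t}^{t}\phi(x)[Q(\psi-\alpha x)+Q(\psi+\alpha x)]\,dx$. The substitution $x\mapsto-x$ with $\phi(-x)=\phi(x)$ shows the two summands integrate to the same value, so $\operatorname{FPR}(\psi)=2\theta^{-1}\int_{-t}^{t}\phi(x)Q(\psi-\alpha x)\,dx$. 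Splitting $\int_{-t}^{t}=\int_{-\infty}^{t}-\int_{-\infty}^{-t}$ and matching the argument $\psi-\alpha x$ to $(\rho s-\omega)/\sqrt{1-\rho^2}$ in the Owen identity fixes $\rho=-\alpha/\varsigma$, $\sqrt{1-\rho^2}=1/\varsigma$, $\omega=-\psi/\varsigma$, hence $\int_{-\infty}^{a}\phi(x)Q(\psi-\alpha x)\,dx=\Phi_2(a,-\psi/\varsigma;-\alpha/\varsigma)$; the swap symmetry $\Phi_2(\xi,\omega;\rho)=\Phi_2(\omega,\xi;\rho)$ then yields the stated $\operatorname{FPR}(\psi)$.

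For the false negative rate on the tails, under $C=1$ with $\tilde x\sim f^{\tails}_{\tilde X^{\ko}}$ (density $\phi(x)/(1-\theta)$ for $|x|>t$) the encoder outputs $\beta x+\delta\operatorname{sign}(x)$ and $\tilde W=\beta x+\delta\operatorname{sign}(x)+\tilde Z$, so $\operatorname{FNR}^{\tails}(\psi)=(1-\theta)^{-1}\int_{|x|>t}\phi(x)\Pr\{|\tilde W|\le\psi\}\,dx$. On $x>t$ one finds $\Pr\{|\tilde W|\le\psi\}=Q(\beta x+\kappa_-)-Q(\beta x+\kappa_+)$ with $\kappa_\pm=\delta\pm\psi$, and on $x<-t$ the oddness of $\beta x+\delta\operatorname{sign}(x)$, evenness of $\phi$, and the symmetry $\Pr\{|\mathcal N(\mu,1)|\le\psi\}=\Pr\{|\mathcal N(-\mu,1)|\le\psi\}$ make the $x<-t$ contribution equal to the $x>t$ one, so $\operatorname{FNR}^{\tails}(\psi)=2(1-\theta)^{-1}\int_{t}^{\infty}\phi(x)[Q(\beta x+\kappa_-)-Q(\beta x+\kappa_+)]\,dx$. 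The substitution $x\mapsto-x$ turns $\int_{t}^{\infty}\phi(x)Q(\beta x+\kappa)\,dx$ into $\int_{-\infty}^{-t}\phi(u)Q(\kappa-\beta u)\,du$, which matches the Owen kernel with $\rho=-\beta/\vartheta$, $\sqrt{1-\rho^2}=1/\vartheta$, $\omega=-\kappa/\vartheta$; collecting the $\kappa_\pm$ pieces and applying the swap symmetry together with the first-argument reflection $\Phi_2(\xi,\omega;\rho)+\Phi_2(-\xi,\omega;-\rho)=Q(-\omega)$ casts the result into the form \eqref{eq: fnr tails}.

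The main obstacle is not conceptual but bookkeeping in the false negative computation: carrying the $\operatorname{sign}$ term through the encoder, executing the two symmetry reductions correctly, and aligning the reflection $x\mapsto-x$ with both the half-line $(-\infty,\xi]$ and the correlation sign required by the Owen identity so that all arguments land on $\pm t$, $-\psi/\varsigma$ and $\kappa_\pm/\vartheta$. One should also check $\alpha/\varsigma<1$ and $\beta/\vartheta<1$, which hold for all finite $\alpha,\beta$, so that the bivariate normal CDF is well defined, and note that $\operatorname{FNR}^{\unknown}(\psi)=\Pr\{|\tilde W|\le\psi\mid \tilde x\sim f^{\unknown}_{\tilde X^{\ko}}\}$ cannot be made more explicit without specifying $f^{\unknown}_{\tilde X^{\ko}}$.
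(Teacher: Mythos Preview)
Your proof is correct and follows essentially the same path as the paper's: condition on $\tilde X$, reduce by parity, and recognize each half-line integral $\int_{-\infty}^{a}\phi(s)\,Q(bs+c)\,ds$ as a $\Phi_2$ via Owen's identity 10{,}010.2; the only cosmetic differences are that the paper integrates the $\ko$ term directly over $(-\infty,-t)$ (so no final reflection identity is needed) and handles the FPR via Owen 10{,}010.4 rather than your $[-t,t]=(-\infty,t]\setminus(-\infty,-t]$ split plus the $\Phi_2$ argument swap. One caveat: your first-argument reflection $\Phi_2(\xi,\omega;\rho)+\Phi_2(-\xi,\omega;-\rho)=Q(-\omega)$ flips the correlation sign as well, so it delivers $+\beta/\vartheta$ rather than the $-\beta/\vartheta$ printed in \eqref{eq: fnr tails}; the paper's own derivation (with $\beta=\omega/\sqrt{1-\omega^2}$ giving $\omega=+\beta/\vartheta$) lands on the same positive sign, so this is a typo in the stated formula, not a flaw in your argument.
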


We can consider two extreme cases; \textit{i}) $\beta = 0$, the anomalies are sign encoded and $\operatorname{FNR} \left( \psi \right) = Q( \delta -  \psi )  -  Q ( \delta + \psi  )$. 
 \textit{ii})
$\epsilon = 0$, we have no contamination and \eqref{eq: fnr tails} reduces to  $\mathcal{R}(\gamma_{\alpha, \beta, \delta}) = \theta \operatorname{FPR} ( \psi ) + ( 1 - \theta ) \operatorname{FNR}^{\tails}( \psi )$. Moreover, if we assume the unknown anomaly is uniform over the interval $[-m, -t] \cup [t, m]$ with $m > t$, i.e., $ f^\unknown_{\tilde X^\ko}(\xi) =
1 / [2(m - t)] \, \mathbf{1}_{[-m, -t] \cup [t, m]}(\xi)$, for $\beta \neq 0$, we obtain
\begin{align}
 \operatorname{FNR}^\uniform \left( \psi \right) &= \left[G(\kappa_+ + \beta m) - G(\kappa_+ + \beta t) \right] / \nu\\
\notag
& \quad \quad+ \left[G(\kappa_- + \beta t) - G(\kappa_- + \beta m) \right] / \nu
\end{align}
where $G ( \xi ) =  \xi  Q (-\xi) + \phi( \xi) $ and $\nu = \beta (m-t)$.

Finally, to compute the risk, one still needs to set a threshold $\psi$. One solution is the Bayes optimal threshold $\psi^\star$ obtained by solving the stationarity condition $\partial / \partial \psi[\mathcal{R}(\gamma_{\alpha, \beta, \delta})] = 0$. With Lemma~\ref{lemma: detector performance} and \eqref{eq: power AD}, we derive the following proposition.

\begin{prop}[DCP solution]
\label{prop: design ad}
For $\epsilon=0$ and $\beta=0$ a heuristic solution ($\alpha^\star$, $\delta^\star$, $\psi^\star$) to \eqref{eq:DCP} satisfies the system
\begin{equation}
\begin{cases}
    2 \left[\Phi_2 \left(-\psi^\star / \varsigma, t; -\alpha^\star / \varsigma \right) - \Phi_2 \left( -\psi^\star / \varsigma, -t; -\alpha^\star / \varsigma \right) \right] \\
     + \left( 1 - \theta \right)\left[ Q( \delta -  \psi )  -  Q ( \delta + \psi  ) \right] = P_e  \\
     2\phi\left(\psi^\star / \varsigma\right) \left[ Q\left(\alpha^\star \psi^\star / \varsigma  + t\varsigma  \right) - Q \left(\alpha^\star \psi^\star / \varsigma  - t\varsigma \right) \right] / \varsigma\\
+ \left( 1 - \theta \right) \left[ \phi \left( \delta + \psi^\star \right)
+ \phi \left( \delta - \psi^\star \right) \right] = 0,
\end{cases}
\end{equation}
where $\delta = \sqrt{\{\operatorname{SNR} - (\alpha^\star)^2 [\theta - 2 t \phi( t ) ]\}/(1 - \theta)}$ and $\varsigma = \sqrt{1 + (\alpha^\star)^2}$.
\end{prop}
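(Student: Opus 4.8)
The plan is to turn \eqref{eq:DCP} into a finite-dimensional problem in $(\alpha,\delta,\psi)$ and then read off the stationarity and feasibility conditions. Under the assumptions $\epsilon=0$, $\beta=0$ the encoder \eqref{eq: AD encoding} is parametrized solely by $(\alpha,\delta)$, the normal-signal decoder is fixed by Lemma~\ref{lemma: decoder ad}, and the detector by Lemmas~\ref{lemma: detector}--\ref{lemma: detector performance} once a threshold $\psi$ is chosen. The word ``heuristic'' refers to three simplifications I would make explicit: the MSE objective is replaced by its dominant normal-region term $\theta\,\mathrm{MMSE}^{\ok}(\alpha)$ (consistent with decoding through Lemma~\ref{lemma: decoder ad} and with $\theta$ close to $1$); $\psi$ is set to the Bayes-optimal threshold $\psi^\star$ of the risk of Lemma~\ref{lemma: detector performance}; and both the power and the classification constraints are shown to be active at the optimum.

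First I would establish that $\mathrm{MMSE}^{\ok}(\alpha)$ from Lemma~\ref{lemma: decoder ad} is strictly decreasing in $\alpha$, so minimizing MSE amounts to pushing $\alpha$ as large as the constraints allow. Next, for fixed $\alpha$ the risk in Lemma~\ref{lemma: detector performance} with $\beta=0$ reads $\mathcal{R}=\theta\,\operatorname{FPR}(\psi)+(1-\theta)[Q(\delta-\psi)-Q(\delta+\psi)]$, and a sign check of $\partial\mathcal{R}/\partial\delta=(1-\theta)[\phi(\delta+\psi)-\phi(\delta-\psi)]<0$ shows the risk decreases in $\delta$; hence leaving power unused is never beneficial and \eqref{eq: power AD} holds with equality. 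Specializing \eqref{eq: power AD} to $\beta=0$ (so $\operatorname{SNR}^{\ok}=\alpha^2[1-2t\phi(t)/\theta]$, $\operatorname{SNR}^{\ko}=2\delta^2 Q(t)/(1-\theta)$) and using $1-\theta=2Q(t)$, solving for $\delta$ gives $\delta^\star=\sqrt{\{\operatorname{SNR}-(\alpha^\star)^2[\theta-2t\phi(t)]\}/(1-\theta)}$, with $\theta-2t\phi(t)>0$ since $\theta=2\int_{0}^{t}\phi(s)\,ds>2t\phi(t)$.

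Substituting $\delta=\delta^\star(\alpha)$, I would check that the attained risk $\min_\psi\mathcal{R}$ is increasing in $\alpha$ — both because a larger $\alpha$ spreads the normal cluster (raising $\operatorname{FPR}$) and because $d\delta^\star/d\alpha<0$ combined with $\partial\mathcal{R}/\partial\delta<0$ — so the best feasible $\alpha$, i.e. $\alpha^\star$, is the one where the classification constraint binds: $\mathcal{R}(\gamma_{\alpha^\star,0,\delta^\star})=P_e$ at $\psi=\psi^\star$. Writing this out with the Lemma~\ref{lemma: detector performance} expressions at $\epsilon=0,\beta=0$ (so $\pi^{\ok}=\theta$, $\theta\operatorname{FPR}(\psi)=2[\Phi_2(-\psi/\varsigma,t;-\alpha/\varsigma)-\Phi_2(-\psi/\varsigma,-t;-\alpha/\varsigma)]$ and $\operatorname{FNR}^{\tails}(\psi)=Q(\delta-\psi)-Q(\delta+\psi)$) is exactly the first equation of the system. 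The second equation is the Bayes stationarity $\partial\mathcal{R}/\partial\psi=0$: differentiating $\mathcal{R}=2[\Phi_2(-\psi/\varsigma,t;-\alpha/\varsigma)-\Phi_2(-\psi/\varsigma,-t;-\alpha/\varsigma)]+(1-\theta)[Q(\delta-\psi)-Q(\delta+\psi)]$ via $\partial_\xi\Phi_2(\xi,\omega;\rho)=Q[(\rho\xi-\omega)/\sqrt{1-\rho^2}]\,\phi(\xi)$, with the simplification $1-(\alpha/\varsigma)^2=1/\varsigma^2$ (so $\sqrt{1-\rho^2}=1/\varsigma$) and the chain rule in $\xi=-\psi/\varsigma$, collapses the $\Phi_2$ part to $(2/\varsigma)\phi(\psi/\varsigma)[Q(\alpha\psi/\varsigma+t\varsigma)-Q(\alpha\psi/\varsigma-t\varsigma)]$ and the $Q$ part to $(1-\theta)[\phi(\delta-\psi)+\phi(\delta+\psi)]$, which is the claimed second equation.

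I expect the bookkeeping in the $\partial_\psi\Phi_2$ derivative (signs and the collapse $1-(\alpha/\varsigma)^2=1/\varsigma^2$) to be the only delicate computation, but the genuine conceptual content — and the reason the statement is only ``heuristic'' — is the pair of monotonicity facts used to activate both constraints: that $\mathrm{MMSE}^{\ok}$ decreases in $\alpha$ and that $\min_\psi\mathcal{R}$, after eliminating $\delta$ through the power budget, increases in $\alpha$. I would verify these by explicit sign analysis of the relevant derivatives (and flag that the replacement of the full MSE by its normal-region term is itself the approximation that makes the argument clean), after which the three displayed relations follow as the first-order conditions of the reduced program.
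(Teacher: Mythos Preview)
Your proposal is correct and ends at exactly the same system the paper derives: the power equality producing $\delta^\star(\alpha)$, the risk equation $\mathcal{R}=P_e$ from Lemma~\ref{lemma: detector performance} at $\epsilon=0,\beta=0$, and the stationarity $\partial\mathcal{R}/\partial\psi=0$ via $\partial_\xi\Phi_2(\xi,\omega;\rho)=Q[(\rho\xi-\omega)/\sqrt{1-\rho^2}]\phi(\xi)$ with the collapse $\sqrt{1-(\alpha/\varsigma)^2}=1/\varsigma$. The one substantive difference is how the ``heuristic'' step is handled. The paper simply \emph{assumes} that the power and classification constraints are active and records this as the heuristic; it does not attempt any monotonicity analysis. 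You instead try to \emph{justify} activity: $\mathrm{MMSE}^{\ok}$ decreasing in $\alpha$, $\partial\mathcal{R}/\partial\delta<0$ forcing the power budget tight, and $\min_\psi\mathcal{R}$ increasing in $\alpha$ along the power boundary forcing the risk constraint tight. That is more than the paper claims, and it buys a clearer picture of why the design point sits at the corner of the feasible region; the cost is that the last monotonicity fact (risk in $\alpha$ after optimizing over $\psi$) genuinely requires tracking how $\psi^\star$ moves with $\alpha$, which you correctly flag as the part that needs explicit sign checking. If you only want to match the paper's level of rigor you can drop those arguments and simply posit active constraints; if you keep them, be aware you are proving a slightly stronger statement than Proposition~\ref{prop: design ad} actually asserts.
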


Since $\beta=0$, Proposition~\ref{prop: design ad} provides a design rule that preserves the fidelity of $\ok$ signal while maximizing detection, consistent with prior work on AD under compressed transmission \cite{Marchioni_TSMC2024, Enttsel_2024AICAS, Enttsel_EUSIPCO2024}. Setting $\epsilon=0$ reflects the common case where the anomaly distribution is unknown at design time. With $\beta=0$ the FNR depends only on $(\delta,\psi)$ and not on the anomaly law, so performance will not deteriorate in the presence of actual anomalies, provided that the prior $\epsilon$ is negligible.

\section{Numerical Results}
\label{sec:num results}

In this section, we present numerical results that validate the theoretical analysis and illustrate the trade-off between distortion, classification error, and transmission power.

For binary classification, Fig.~\ref{fig: classification encodings} provides a visual comparison of three representative piecewise linear encoders derived from Theorem~\ref{thm:DCP-solution} (left) and their corresponding decoders defined by Lemma~\ref{lemma: decoder} (right). The magenta curve shows the linear encoder–decoder optimized for reconstruction, the orange curve the sign–tanh pair optimized for classification, and the blue curve represents a linear combination of the previous mappings.

For comparison, Fig.~\ref{fig: MLGD classification encodings} shows encoding functions obtained using the modern learning-based optimizer MLGD \cite{Yang_WCSP2023, MLAM}. The meta-optimizer optimizes the encoder\footnote{Two-layer LSTM (20 hidden units), trained for 6500 epochs on batches of 300 samples, with learning rate $0.001$, $3\%$ warm-up, and cosine annealing to $2 \times 10^{-4}$.}, while the MMSE decoder is computed by closed form expressions provided in \cite{Akyol_2014TIT}[Theorem~3], where the signal PDF is evaluated numerically with kernel density estimation. We can observe that without restrictions on the function space, the numerically optimized encoder resembles a superposition of a step and a nonlinear function, closely mirroring piecewise-linear designs in Fig.~\ref{fig: classification encodings}.

\begin{table}[h!]
\centering
\begin{tabular}{c|ccccc}
\toprule
$P_e$ & 0.1 & 0.058 & 0.038 & 0.013 & 0.0071 \\
\midrule
MSE (piecewise-linear) & 0.12 & 0.13 & 0.15 & 0.22 & 0.28 \\
MSE (MLGD)              & 0.16 & 0.15 & 0.15 & 0.21 & 0.25 \\
\bottomrule
\end{tabular}
\vspace{0.5em}
\caption{Distortion values for different $P_e$ values, with $P = 3$, $\sigma_X=1$ and $\sigma_Z=0.63$ for the piecewise-linear and MLGD schemes.}
\label{tab:distortion_pe_2x6}
\end{table}

In Fig.~\ref{fig: classification pareto}, we illustrate the DCP Pareto fronts for two channel noise levels, as defined by Theorem~\ref{thm:DCP-solution}. 
The magenta curve corresponds to the region achievable with linear encoding, the orange curve to the region achievable with the sign-only scheme, and the blue curves, obtained with piecewise linear encoders, illustrate low-complexity intermediate trade-offs between distortion and classification error under fixed power.

In Table \ref{tab:distortion_pe_2x6}, we provide a comparison between the piecewise-linear and learned results. MLGD provides better performance towards the central non-trivial region of the tradeoff, but at the cost of computational complexity $O(mn^2)$, where $m,n$ are the batch sizes of the noise and the source samples. 

Similarly, for AD, in Fig.~\ref{fig: ad pareto}, we plot the DCP curves defined by Proposition~\ref{prop: design ad} for two levels of channel noise. 
In this setting, the magenta curve corresponds to allocating all available power to the normal signal, thereby optimizing reconstruction, while the orange curve represents the case where all power is allocated to the anomalous signal, maximizing detectability. The blue curves, obtained with piecewise linear encoders, interpolate between these extremes, characterizing the reconstruction–detection trade-off.

Although Proposition~\ref{prop: design ad} assumes no contamination ($\epsilon = 0$), we empirically verify that the design remains effective under mild contamination. The square markers in Figure~\ref{fig: ad pareto}, corresponding to $\epsilon = 10^{-3}$, follow the continuous curves, confirming the robustness of the proposed approach.

\section{Conclusion}
We investigated the fundamental trade-off between source distortion, classification accuracy, and transmission power in JSCC for Gaussian sources with disjoint classes. Our analysis is carried out within the function space of locally first-order encoder mappings. We examined both binary classification and anomaly detection setups, deriving closed-form solutions for the decoder and the classifier. For the piewise-linear encoder, we showed that the optimal mapping is obtained as the solution of a tractable non-linear algebraic equation. In addition, we presented numerical comparisons with a meta-learning-based optimization method, and visualized the achievable trade-offs through the Pareto range and the corresponding Pareto front. Our results show that, even though the optimality of the piecewise-linear scheme does not hold, unlike the corner cases of pure reconstruction or pure classification, it provides a low-complexity near-optimal alternative for zero-delay transmission. Future work will consider bandwidth mismatch scenarios and more general source and class distributions.

\bibliographystyle{IEEEbib}
\bibliography{strings,refs}

\vfill\pagebreak
\section*{Appendix A}

\begin{proof}[Proof of Lemma~\ref{lemma: decoder}]
\label{proof: of lemma decoder}
From \cite{Akyol_2014TIT}[Theorem~3], given an optimal encoder $\tilde g$, the MMSE decoder can be expressed as
\begin{align}
      \tilde h_{\alpha,\beta}(\tilde w)&= \frac{\int_{-\infty}^{+\infty} t f_{\tilde Z}(\tilde{w} - \tilde g(t)) f_{\tilde X} (t)\, dt}{\int_{-\infty}^{+\infty} f_{\tilde Z}(\tilde w- \tilde g(t)) f_{\tilde X}(t) \, dt} \\
       \label{eq: dec 1}
       &= \frac{\int_{-\infty}^{+\infty} t \phi\left(\tilde g(t) - \tilde w) \right) \phi\left(t\right) dt}{\int_{-\infty}^{+\infty}  \phi\left( \tilde g(t) - \tilde w\right) \phi\left( t \right) dt}
\end{align}
where the last equality is due to $f_{\tilde X}(\xi) = f_{\tilde Z}(\xi) = \phi(\xi) = \phi(-\xi)$.
Now, since $\tilde g(t) = \alpha t + \beta \operatorname{sign}( t)$, we define the anti-derivatives
\begin{align}
    N_\pm & = \int t \phi\left( \alpha t \pm \beta - \tilde w \right) \phi\left(t\right) dt \\
    D_\pm & = \int \phi\left( \alpha t \pm \beta - \tilde w \right) \phi\left(t\right) dt
\end{align}
to express the decoding functions as
\begin{align}
        \label{eq: dec 2}
       \tilde h(\tilde w)_{\alpha,\beta}&= \frac{\left[N_- \right]_{-\infty}^0 + \left[ N_+ \right]_0^\infty}{\left[D_- \right]_{-\infty}^0 + \left[D_+ \right]_{0}^\infty}.
\end{align}
If we define $\varsigma = \sqrt{{1 + \alpha^2}}$, $\nu_\pm = \left( \pm \beta - \tilde w \right)$ and use \cite[Identities 111; 110]{Owen_1980}\footnote{Identity 110 wrongly omits the $\alpha$ term highlighted in red.}, the anti-derivatives can be written in terms of $\Phi(x) = Q(-x)$
\begin{align}
\label{eq: primitive N}
N_\pm &= \int t \phi\left( \alpha t +  \nu_\pm  \right) \phi\left(t\right) dt \\
&= \frac{-1}{\varsigma^2}\phi\left( \frac{\nu_\pm}{\varsigma}\right) \left[\phi\left(t \varsigma + \frac{\alpha \nu_\pm }{\varsigma} \right )
+ \frac{\alpha \nu_\pm }{\varsigma}\Phi\left( t \varsigma + \frac{\alpha  \nu_\pm }{\varsigma} \right) \right]\\
\label{eq: primitive D}
D_\pm &=\int \phi\left( \alpha t +  \nu_\pm  \right) \phi\left(t\right) dt = \frac{1}{\varsigma}\phi\left( \frac{\nu_\pm}{\varsigma}\right)\Phi\left(t \varsigma + \frac{\textcolor{red}{\alpha}\nu_\pm }{\varsigma} \right).
\end{align}
Since $\phi(\pm \infty) = 0$, $\Phi(-\infty) = 0$ and $\Phi(\infty) = 1$ the numerator becomes
\begin{align}
        &\left[N_- \right]_{-\infty}^0 + \left[N_+ \right]_{0}^\infty \\
        &= -\frac{1}{\varsigma^2}\phi\left( \frac{\nu_-}{\varsigma}\right) \left[\phi\left(\frac{\alpha \nu_- }{\varsigma} \right ) + \frac{\alpha \nu_- }{\varsigma}\Phi\left( \frac{\alpha  \nu_- }{\varsigma} \right)\right]\\
        &- \frac{1}{\varsigma^2}\phi\left( \frac{\nu_+}{\varsigma}\right) \left[\frac{\alpha \nu_- }{\varsigma}-\phi\left(\frac{\alpha \nu_+ }{\varsigma} \right ) - \frac{\alpha \nu_+ }{\varsigma}\Phi\left( \frac{\alpha  \nu_+ }{\varsigma} \right)\right] \\
         &=-\frac{1}{\varsigma^2}\left[\phi\left( \frac{\nu_+}{\varsigma}\right) \phi\left( \frac{ \alpha \nu_+}{\varsigma}\right)  + \frac{\alpha \nu_- }{2\varsigma} \phi_\mathrm{SN}\left( \frac{\nu_-}{\varsigma}; \alpha \right) \right]\\
        &- \frac{1}{\varsigma^2}\left[\phi\left( \frac{\nu_-}{\varsigma}\right) \phi\left( \frac{ \alpha \nu_-}{\varsigma}\right) + \frac{\alpha \nu_+ }{2\varsigma} \phi_\mathrm{SN}\left( \frac{\nu_+}{\varsigma}; -\alpha \right) \right]
\end{align}
where in the last equation we exploit the facts that $\Phi(\xi) = Q(-\xi)$, $1 - \Phi(\xi) = Q(\xi)$ paired with the definition in \eqref{eq: skew pdf}. Now it is straightforward to show that $\phi\left( \nu_\pm / \varsigma \right)\phi\left(  \alpha \nu_\pm / \varsigma \right)  = \exp\left( -\nu^2_\pm / 2\right) / ( 2 \pi) $, hence ultimately
\begin{align}
        &\left[N_- \right]_{-\infty}^0 + \left[N_+ \right]_{0}^\infty \\
        &=-\frac{1}{\varsigma^2}\left[\frac{1}{\sqrt{2 \pi}}\phi\left( \nu_-\right) +
        \frac{\alpha \nu_- }{2\varsigma} \phi_\mathrm{SN}\left( \frac{\nu_-}{\varsigma}; \alpha \right) \right]\\
        &- \frac{1}{\varsigma^2}\left[-\frac{1}{\sqrt{2 \pi}}\phi\left( \nu_+ \right) + \frac{\alpha \nu_+ }{2\varsigma} \phi_\mathrm{SN}\left( \frac{\nu_+}{\varsigma}; -\alpha \right) \right].
\end{align}
 Similarly,
\begin{align}
       &\left[D_- \right]_{-\infty}^0 + \left[D_+ \right]_{0}^\infty = \frac{1}{\varsigma}\phi\left( \frac{\nu_-}{\varsigma}\right) \Phi\left(\frac{\alpha \nu_- }{\varsigma} \right) \\
       &+ \frac{1}{\varsigma}\phi\left( \frac{\nu_+}{\varsigma}\right) \left[ 1 - \Phi\left(\frac{\alpha \nu_+ }{\varsigma} \right) \right]\\
       &=\frac{1}{2\varsigma} \left[\phi_\mathrm{SN}\left( \frac{\nu_-}{\varsigma}; \alpha  \right) +\phi_\mathrm{SN}\left( \frac{\nu_+}{\varsigma}; - \alpha  \right)\right].
\end{align}
With this, \eqref{eq: dec 2} can be written as
\begin{align}
       \tilde h(\tilde w)_{\alpha,\beta}&= \frac{2}{\sqrt{2 \pi}\varsigma}\frac{\phi\left( \nu_+ \right) - \phi\left( \nu_- \right) }{\phi_\mathrm{SN}\left( \nu_- / \varsigma; \alpha \right) +\phi_\mathrm{SN}\left( \nu_+ / \varsigma; -\alpha \right)} \\
       & - \frac{\alpha}{\varsigma^2} \frac{\nu_- \phi_\mathrm{SN}\left( \nu_- / \varsigma; \alpha \right)  + \nu_+ \phi_\mathrm{SN}\left( \nu_+ / \varsigma; -\alpha \right) }{\phi_\mathrm{SN}\left( \nu_- / \varsigma; \alpha \right) +\phi_\mathrm{SN}\left( \nu_+ / \varsigma; -\alpha \right)}
\end{align}
Since
\begin{align}
\label{eq: sinh1}
&\phi\left( \nu_+ \right) - \phi\left( \nu_- \right)\\
&= \frac{1}{\sqrt{2 \pi}} \exp\left(- \frac{\tilde w^2 + \beta^2}{2}\right)\cdot \left[ \exp\left(\tilde w \beta \right) -  \exp\left( -\tilde w \beta \right)\right] \\
\label{eq: sinh2}
&= \frac{2}{\sqrt{2 \pi}} \exp\left(- \frac{\tilde w^2 + \beta^2}{2}\right) \sinh\left( \tilde w \beta \right)
\end{align}
and if we exploit the fact that $\phi_\mathrm{SN}(-x;\lambda) = \phi_\mathrm{SN}(x;-\lambda)$ to define $S_{\pm}(\tilde w)=\phi_\mathrm{SN}\left( \nu_\pm / \varsigma; \mp \alpha  \right) = \phi_{\mathrm{SN}}\left[\left(\tilde w\mp\beta \right)/ \varsigma;\pm\alpha\right]$
\begin{align}
   \tilde h(\tilde w)_{\alpha,\beta}&= \frac{2}{ \pi \varsigma}\frac{\exp\left(- \frac{w^2 + \beta^2}{2}\right) \sinh\left(w \beta \right)}{S_-\left(\tilde w \right)+S_+\left(\tilde w \right)} \\
   & + \frac{\alpha}{\varsigma^2} \frac{\left(\tilde w + \beta \right) S_-\left(\tilde w \right) + \left(\tilde w - \beta \right) S_+\left(\tilde w \right) }{S_-\left(\tilde w \right)+S_+\left(\tilde w \right)}
\end{align}
After replacing $\varsigma = \sqrt{{1 + \alpha^2}}$, we get get exactly the expression in \eqref{lemma: decoder}.

\end{proof}

\begin{proof}[Proof of the corner cases for Lemma~\ref{lemma: decoder}]
For $\beta = 0$, $\sinh(0) = 0$, and exploiting the definition of $S_{\pm}(\tilde w)$
\begin{align}
    \tilde h(\tilde w)_{\alpha, 0}&= \frac{\alpha}{1 +\alpha^2} \frac{\tilde w \phi_{\mathrm{SN}}\left[ \frac{\tilde w}{\sqrt{1 + \alpha^2}} ; - \alpha\right]+  \tilde w  \phi_{\mathrm{SN}}\left[ \frac{\tilde w}{\sqrt{1 + \alpha^2}} ; \alpha\right] }{\phi_{\mathrm{SN}}\left[ \frac{\tilde w}{\sqrt{1 + \alpha^2}} ; - \alpha\right] + \phi_{\mathrm{SN}}\left[ \frac{\tilde w}{\sqrt{1 + \alpha^2}} ; \alpha\right]} \\
    &= \frac{\alpha}{1 +\alpha^2} \tilde w
\end{align}
For $\alpha = 0$, since $\phi_\mathrm{SN}(\xi;0) = 2\phi(\xi)$, $S_{\pm}(\tilde w) = \phi(\tilde w\mp\beta)$ and 
\begin{align}
    \tilde h(\tilde w)_{0, \beta}&= \frac{2}{ \pi}\frac{\exp\left(- \frac{w^2 + \beta^2}{2}\right) \sinh\left(w \beta \right)}{\phi(\tilde w +\beta)+\phi(\tilde w - \beta)}.
\end{align}
With
\begin{align}
&\phi(\tilde w +\beta)+\phi(\tilde w - \beta)\\
&= \frac{1}{\sqrt{2 \pi}} \exp\left(- \frac{\tilde w^2 + \beta^2}{2}\right)
 \left[ \exp\left(\tilde w \beta \right) +  \exp\left( -\tilde w \beta \right)\right] 
\end{align}
and exploiting \eqref{eq: sinh2} together with the definition of $\tanh$
\begin{align}
    \tilde h(\tilde w)_{0, \beta}&= \sqrt{\frac{2}{ \pi}}\frac{\exp\left(\tilde w \beta \right) -  \exp\left( -\tilde w \beta \right)}{\exp\left(\tilde w \beta \right) +  \exp\left( -\tilde w \beta \right)} = \sqrt{\frac{2}{ \pi}} \tanh\left(\tilde w \beta\right).
\end{align}
\end{proof}

\begin{proof}[Expression of the non-normalized decoder]
From \cite{Akyol_2014TIT}[Theorem~3], given an optimal encoder $ g$, the MMSE decoder can be expressed as
\begin{align}
    h_{A, B}( w)&= \frac{\int_{-\infty}^{+\infty} \xi  f_{ Z}\left[w -  g(t) \right] f_{ X} ( \xi )\, d\xi }{\int_{-\infty}^{+\infty} f_{ Z}\left[ w-  g(\xi ) \right] f_{ X}(\xi ) \, d\xi }  
\end{align}
Since for a Gaussian variable $V \sim f_V (v) = \phi\left( v / \sigma_V \right)/\sigma_V$
\begin{align}
    h_{A, B}( w)&= \frac{\int_{-\infty}^{+\infty} \xi  \phi\left[  w / \sigma_Z -  g(\xi ) / \sigma_Z \right] \phi\left( \xi  / \sigma_X \right)\, d\xi }{\int_{-\infty}^{+\infty} \phi\left[ w / \sigma_Z - g(\xi ) / \sigma_Z \right] \phi\left( \xi  / \sigma_X \right) \, d\xi }  
\end{align}
If we replace $\xi = \sigma_X t$
\begin{align}
    h_{A, B}( w)&= \frac{\int_{-\infty}^{+\infty} \sigma_X t \phi\left[  w / \sigma_Z - g( \sigma_X t ) / \sigma_Z \right]\phi\left( t \right)\, d t  }{\int_{-\infty}^{+\infty} \phi\left[  w / \sigma_Z - g( \sigma_X t ) / \sigma_Z \right] \phi\left( t\right) \, d t} \\
    & = \sigma_X \frac{\int_{-\infty}^{+\infty} t \phi\left[  w / \sigma_Z - \tilde g(  t) \right]\phi\left( t \right)\, d t  }{\int_{-\infty}^{+\infty} \phi\left[ w / \sigma_Z - \tilde g(  t) \right] \phi\left( t\right) \, d t} 
\end{align}
where $ g( \sigma_X t ) / \sigma_Z = A \sigma_X t / \sigma_Z + B \operatorname{sign}(t) / \sigma_Z = \tilde g(  t)$. Now from \eqref{eq: dec 1} it follows that $h_{A, B}( w) = \sigma_X \tilde h_{\alpha,\beta} (w / \sigma_Z ) $.

\end{proof}

\begin{proof}[Proof of Lemma~\ref{lemma: classifier}]
Due to symmetry the optimal Bayesian binary classifier is $\gamma_{\alpha,\beta}(\tilde w) = 1\left\{ \tilde w > 0 \right\}$ characterized by the risk
\begin{align}
    &\mathcal{R}(\gamma_{\alpha, \beta} ) = \Pr\left\{\gamma \left(\tilde W \right) \neq C \right\} 
    = \Pr\left\{\tilde W > 0 \wedge \tilde X < 0 \right\} \\
    &+ \Pr\left\{\tilde  W < 0 \wedge \tilde X > 0 \right\}
    = 2\Pr\left\{\tilde W > 0 \wedge \tilde X < 0 \right\}
\end{align}
with the last equality due to symmetry. Now if we exploit the fact that $f_{\tilde X}(t) = f_{\tilde Z}(t) = \phi(t)$ and the definition of $\tilde g$
\begin{align}
    \mathcal{R}(\gamma_{\alpha, \beta} ) &= 2\int_{-\infty}^0 \Pr\left\{ \tilde W > 0 \mid \tilde X = t \right\} f_{\tilde X}(t) \, dt\\
    &= 2\int_{-\infty}^0 \Pr\left\{\tilde Z > - \tilde g(t)\right\} \phi(t) \, dt\\
    &= 2\int_{-\infty}^0 \int_{- \tilde g(\xi)}^{\infty} \phi(\xi)\,d\xi \phi(t) \, dt \\ 
    &= 2\int_{-\infty}^0 Q\left(- \tilde g(t) \right) \phi(t) \, dt \\
    &= 2\int_{-\infty}^0 \Phi \left( \alpha t - \beta \right) \phi(t) \, dt
\end{align}
To solve the integral we rely on \cite[Identity 10,010.5]{Owen_1980}, so that
\begin{align}
\label{eq: risk 2}
    \mathcal{R}(\gamma_{\alpha, \beta}) &= \Phi \left(-\frac{\beta}{\sqrt{1 + \alpha^2}} \right) - 2 T \left( -\frac{\beta}{\sqrt{1 + \alpha^2}}, \;\alpha \right) \\
    &= \Phi_\mathrm{SN}\left( - \frac{\beta}{ \sqrt{\alpha^2 + 1}};  \; \alpha \right)    
\end{align}
where, in the last equality to get the definition of $\Phi_\mathrm{SN}(\xi)$, we exploit the fact that $Q(\xi) = \Phi(-\xi)$.
\end{proof}

\begin{proof}[Proof of the corner cases of Lemma~\ref{lemma: classifier}]
For $\beta=0$, we can use the identity $T\left(0, a\right)=\arctan\left(a\right) / (2\pi)$ \cite{Owen_1956AMS} so that \eqref{eq: risk 2} leads to
\begin{align}
    \mathcal{R}(\gamma_{\alpha, 0} ) &= \Phi\left(  0 \right) - 2 T\left(0,  \; \alpha \right) = \frac{1}{2} - \frac{1}{\pi} \arctan\left( \alpha \right)\\
    &= \frac{1}{2} - \frac{1}{\pi} \left[ \frac{\pi}{2} - \arctan\left( \frac{1}{\alpha} \right)\right ]\\
    \label{eq: risk beta=0}
    &= \frac{1}{\pi} \arctan\left( \frac{1}{\alpha} \right) = \frac{1}{\pi} \operatorname{arccot} \left( \alpha \right). 
\end{align}

In case $\alpha=0$, using the identity $\text{T}\left(h, 0\right) = 0$ \cite{Owen_1956AMS}, from \eqref{eq: risk 2} we get
\begin{align}
    \label{eq: risk alpha=0}
    \mathcal{R}(\gamma_{0, \beta} ) &= \Phi\left(   -\beta \right) = Q\left(  \beta \right).
\end{align}
\end{proof}

\begin{proof}[Proof of Theorem~\ref{thm:DCP-solution}]
Given a constraint in power $\textbf{E}_X [ g^2(x)] \leq P$,
for $ \tilde g( \tilde x) = \alpha \tilde x + \beta \text{sign} (\tilde x)$ it translates to:
\begin{align}
    \mathbb{E} \left[ \tilde g^2(x) \right] &= \alpha^2 \mathbb{E} \left[  \tilde x^2 \right] + 2 \alpha \beta  \mathbb{E} \left[  \tilde x \text{sign}( \tilde x) \right] \\
    &+ \beta^2 \mathbb{E} \left[ \text{sign}^2(\tilde x) \right] = \alpha^2  + 2 \alpha \beta \mathbb{E}\left[ |\tilde x| \right] + \beta^2 \\
    &= \alpha^2  + 2 \alpha \beta \sqrt{{\frac{2}{\pi}}}  + \beta^2 \leq \frac{P}{\sigma_Z^2} = \operatorname{SNR}
\end{align}
Now, pairing the power constraint with Lemma~\ref{lemma: classifier}, the DCP optimization problem can be reduced to
\begin{align}
    \label{eq: MSE_objective}
    (\alpha^\star, \beta^\star) &= \arg\min_{(\alpha, \beta) \in \mathbb{R}_+^2} \quad \mathrm{MSE}\left(x, \hat{x} \right) \\
    \text{s.t.} 
    \label{eq: Pe_constraint}
    \quad &\Phi_\mathrm{SN}\left(- \frac{\beta}{ \sqrt{\alpha^2 + 1}};  \; \alpha \right) \leq P_e, \\
    \label{eq: P_constraint}
    & \alpha^2 + 2\alpha \beta \sqrt{{\frac{2}{\pi}}} + \beta^2  \leq \operatorname{SNR}
\end{align}
Since the risk is minimized for $\alpha = 0$ and $\beta =\sqrt{\operatorname{SNR}}$, we have a constraint on the minimum achievable $P_e$ given $P$. In contrast, $P_e$ should not exceed the one achieved for $\beta=0$, otherwise the optimization is trivial, i.e. $
\alpha=\sqrt{\operatorname{SNR}}$ and $\beta = 0$. Hence, according to \eqref{eq: risk beta=0} and  \eqref{eq: risk alpha=0}, the Pareto frontier is defined for $P_e$ such that
\begin{equation}
\text{Q}\left( \sqrt{\operatorname{SNR}} \right)
      \leq P_e \leq
      \frac{1}{\pi} \operatorname{arccot} \left( \sqrt{\operatorname{SNR}} \right)
\end{equation}

Now, the constraint on $P_e$ in \eqref{eq: Pe_constraint} is active at the solution; otherwise, the budget on $P$ could be used to improve MSE by reducing $\beta$ and increasing $\alpha$.
The constraint on $P$ is also active; otherwise, it could be used to reduce MSE by increasing $\alpha$ and adjusting $\beta$ along the error-constraint curve. Since \eqref{eq: Pe_constraint} and \eqref{eq: P_constraint} can be considered with equality, DCP is a non-linear system of two equations in two variables.

Considering \eqref{eq: P_constraint} with equality, given $\alpha \leq \sqrt{\operatorname{SNR}}$, 
\begin{equation}
    \label{eq: beta}
    \beta = -\alpha\sqrt{\frac{2}{\pi}} + \sqrt{\operatorname{SNR}-\alpha^2\!\left(1-\frac{2}{\pi}\right)}
\end{equation}
that if plugged into \eqref{eq: Pe_constraint} with equality leads to a non-linear equation in $\alpha$
\begin{equation}
    \label{eq: non-linear equation}
    \Phi_\mathrm{SN}\left(\frac{\alpha \sqrt{2} - \sqrt{\pi \operatorname{SNR} - \alpha^2 \left( \pi - 2 \right)}}{\sqrt{ \pi \left( 1 + \alpha^2 \right)}}; \; \alpha \right)  = P_e
\end{equation}

Provided that the solution to this non-linear equation is unique, it gets us $\alpha^\star$, which when plugged in \eqref{eq: beta} gives us $\beta^\star$. $A^\star$ and $B^\star$ are then obtained from the definitions of $\alpha^\star$ and $\beta^\star$.

To prove the uniqueness, we show that the left-hand side of \eqref{eq: non-linear equation} is a strictly monotonic function, since its derivative is positive, for $\alpha > 0$. 

We start by defining
\begin{equation}
    u(\alpha)= \frac{\alpha \sqrt{2} - \sqrt{\pi \operatorname{SNR} - \alpha^2 \left( \pi - 2 \right)}}{\sqrt{ \pi \left( 1 + \alpha^2 \right)}}
\end{equation}
From the definitions of $\Phi_\mathrm{SN}(x, \lambda)$ and $T(h,a)$ \cite{Owen_1980} it follows
\begin{align}
    &\Phi_\mathrm{SN}\left[ u\left(\alpha\right); \alpha \right] = Q\left[-u\left(\alpha\right)\right] - 2\,T\left[u\left(\alpha\right),\alpha \right]  \\
    &=\Phi\left[u\left(\alpha\right)\right] - 2\int_0^\alpha \frac{\phi\left[u\left(\alpha\right)\right] \phi\left[ \xi u\left(\alpha\right) \right]}{1 + \xi^2} d\xi \\
    &= \Phi\left[u\left(\alpha\right)\right] - 2\int_0^\alpha J\left(\alpha, \xi \right) d\xi.
\end{align}
The definition of the functional $J$ allows us to compute the derivative w.r.t. $\alpha$ using the Leibniz integral rule \cite{Protter_1985Springer}
\begin{align}
    &\Phi'_\mathrm{SN}\left[u\left(\alpha\right); \alpha \right] = u'\left(\alpha\right)\phi\left[u\left(\alpha\right)\right] - 2J\left(\alpha, \alpha \right)\\
    &- 2\int_0^\alpha  J'\left(\alpha, \xi \right) d\xi.
\end{align}
Exploiting the derivative of the product, 
\begin{equation}
    J'\left(\alpha, \xi \right) = - u'\left(\alpha\right) u\left(\alpha\right)\phi\left[u\left(\alpha\right)\right] \phi\left[ \xi u\left(\alpha\right) \right] 
\end{equation}
so that
\begin{align}
&\Phi'_\mathrm{SN}\left[u\left(\alpha\right); \alpha \right] =  
u'\left(\alpha\right)\phi\left[u\left(\alpha\right)\right] - 2\frac{\phi\left[u\left(\alpha\right)\right] \phi\left[ \alpha u\left(\alpha\right) \right]}{1 + \alpha^2}\\
&+ 2u'\left(\alpha\right) u\left(\alpha\right)\phi\left[u\left(\alpha\right)\right] \int_0^\alpha \phi\left[ \xi u\left(\alpha\right) \right] d\xi.
\end{align}
Now with $t=\xi u\left(\alpha\right)$
\begin{equation}
    \int_0^\alpha \phi\left[ \xi u\left(\alpha\right) \right] d\xi = \frac{1}{u\left(\alpha\right)}\int_0^{\alpha u\left(\alpha\right)} \phi\left( t \right) dt = \frac{\Phi\left[ \alpha u\left(\alpha\right) \right] - \frac{1}{2}}{u\left(\alpha\right)}
\end{equation}
After some rearrangement, we can finally write
\begin{equation}
\label{eq: derivative lhs}
\Phi'_\mathrm{SN}\left[u\left(\alpha\right); \alpha \right] = 2 \phi\left[u\left(\alpha\right)\right] \cdot I\left( \alpha \right)
\end{equation}
with
\begin{equation}
    I\left( \alpha \right) = u'\left(\alpha\right) \Phi\left[ \alpha u\left(\alpha\right) \right] - \frac{ \phi\left[ \alpha u\left(\alpha\right) \right]}{1 + \alpha^2} 
\end{equation}
where
\begin{equation}
    u'(\alpha) = \frac{1}{\left( 1 + \alpha^2 \right)^{\frac{3}{2}}}\left[ \sqrt{\frac{2}{\pi}} + \frac{\alpha \left( \pi \operatorname{SNR} + \pi - 2\right)}{\sqrt{\pi^2 \operatorname{SNR} - \alpha^2 \pi \left(\pi - 2\right)}} \right] 
\end{equation}
Now that we have the derivative of $\Phi'_\mathrm{SN}[u(\alpha)]$, assessing its sign from \eqref{eq: derivative lhs} is still non-trivial. For this purpose, we construct a lower bound that we show to be positive.

Writing $\pi \operatorname{SNR}+ \pi - 2 = \pi \operatorname{SNR}+ \alpha^2( \pi - 2) + (\pi - 2 )(1 + \alpha^2) = v(\alpha) + (\pi - 2 )(1 + \alpha^2)$ leads to this equivalent form:
\begin{align}
    u'(\alpha) &= \sqrt{\frac{2}{\pi}} \frac{1}{\left( 1 + \alpha^2 \right)^{\frac{3}{2}}} \\
    &+ \frac{\alpha }{\sqrt{\pi} \left( 1 + \alpha^2 \right)^{\frac{3}{2}}}\left[  \sqrt{v(\alpha)}  + \frac{(\pi - 2 )(1 + \alpha^2)}{ \sqrt{v(\alpha)}}\right].  
\end{align}
Since the two terms in the squared brackets are positive, we can apply AM–GM inequality to them, which leads to
\begin{align}
    u'(\alpha) &\geq \sqrt{\frac{2}{\pi}} \frac{1}{\left( 1 + \alpha^2 \right)^{\frac{3}{2}}}+ \frac{\alpha 2 \sqrt{(\pi - 2 )(1 + \alpha^2)}}{\sqrt{\pi} \left( 1 + \alpha^2 \right)^{\frac{3}{2}}}\\
     &= \frac{\sqrt{2} + \alpha 2 \sqrt{(\pi - 2 )(1 + \alpha^2)}}{\sqrt{\pi} \left( 1 + \alpha^2 \right)^{\frac{3}{2}}}.
\end{align}
Having $\alpha \geq 0$, it is possible to show that 
\begin{equation}
    \sqrt{2} + \alpha 2 \sqrt{(\pi - 2 )(1 + \alpha^2)} \geq \sqrt{2 (1 + \alpha^2)}
\end{equation}
and therefore
\begin{align}
    u'(\alpha) &\geq \sqrt{\frac{2}{\pi}} \frac{1}{1 + \alpha^2}.
\end{align}
With this, we have the following lower bound of $I(\alpha)$
\begin{align}
\sqrt{\frac{2}{\pi}} \frac{1}{1 + \alpha^2} \Phi\left[ \alpha u\left(\alpha\right) \right] - \frac{ \phi\left[ \alpha u\left(\alpha\right) \right]}{1 + \alpha^2}  \leq I\left( \alpha \right). 
\end{align}
Now, since $u(\alpha) < 0\; \forall \; \alpha  < \sqrt{\operatorname{SNR}} $, we can rely on the bound
$\Phi(-x) = Q(x) \leq \sqrt{2 \pi } \phi(x)$ valid for $x > 0$ to obtain
\begin{align}
2\frac{ \phi\left[ \alpha u\left(\alpha\right) \right]}{1 + \alpha^2} - \frac{ \phi\left[ \alpha u\left(\alpha\right) \right]}{1 + \alpha^2} = \frac{ \phi\left[ \alpha u\left(\alpha\right) \right]}{1 + \alpha^2} \leq I\left( \alpha \right) 
\end{align}
$\forall \; |\alpha u\left(\alpha\right)| > 0$. This allows us to bound the derivative as
\begin{equation}
2\frac{ \phi^2\left[ \alpha u\left(\alpha\right) \right]}{1 + \alpha^2} \leq \Phi'_\mathrm{SN}\left[u\left(\alpha\right); \alpha \right] .
\end{equation}
Since $\phi(x) > 0 \; \forall \; |x| < \infty$, $\Phi'_\mathrm{SN}\left[u\left(\alpha\right); \alpha \right] > 0 \;\forall \; 0<|\alpha u\left(\alpha\right)|<\infty$ the left-hand side of \eqref{eq: non-linear equation} is a strictly monotonic function and the solution to \eqref{eq: non-linear equation} is unique.
\end{proof}

\begin{proof}[Proof of MMSE extreme cases.]
In general, the MMSE can be expressed as
\begin{align}
    &\mathrm{MMSE}\left( x, \hat {x}\right) = \mathbb{E} \left[ \left( x - \hat{x} \right)^2\right] = \mathbb{E} \left[ \left( x - h(w) \right)^2\right]\\
    &= \sigma_X^2 \mathbb{E} \left[ \left( \tilde x - \tilde h (\tilde w ) \right)^2\right] = \sigma_X^2 \mathrm{MMSE}\left( \tilde x, \breve {x}\right)
\end{align}
where with $\breve {x} = \tilde h (\tilde w )$ we define the reconstruction of $\tilde x$.
The two useful MMSE expressions involving scaled quantities are
\begin{align}
    \label{eq: MSE expectation}
&\mathrm{MMSE}\left( \tilde x, \breve {x}\right)
  = 1 - 2\mathbb{E} \left[ \tilde x \tilde h(\tilde w) \right] + \mathbb{E} \left[ \tilde h^2(\tilde w) \right]\\
    \label{eq: MSE integral1}
     &= 1  + \int_{-\infty}^{+\infty}  \int_{-\infty}^{+\infty}  \tilde h^2\left( \tilde w \right) \phi \left( \nu \right) \phi \left( t \right) d\nu dt \\
     \label{eq: MSE integral2}
     &- 2 \int_{-\infty}^{+\infty}  \int_{-\infty}^{+\infty} t \tilde h\left(  \tilde w\right)\phi \left( \nu \right) \phi \left( t \right) d\nu dt 
\end{align}
In case $\beta=0$, the equality \eqref{eq: MSE expectation} reduces to
\begin{align}
    &\mathrm{MMSE}\left( \tilde x, \breve {x}\right)
    =1 - \frac{2\alpha}{1 +\alpha^2} \mathbb{E} \left[ \tilde x\tilde w \right]+ \frac{\alpha^2}{\left(1 +\alpha^2\right)^2} \mathbb{E} \left[ \tilde w^2 \right] 
\end{align}
Since, $\mathbb{E}[\tilde x^2] = \mathbb{E}[\tilde z^2]=1$ and $\mathbb{E}[\tilde x\tilde z]=0$, we have $\mathbb{E}[ \tilde x\tilde w ] = \mathbb{E}[\tilde x ( \alpha \tilde x + \tilde z)] = \alpha$ and $\mathbb{E}[ \tilde w^2 ] =\mathbb{E}[ (\alpha \tilde x + \tilde z)^2 ]= \alpha^2 \mathbb{E}[ \tilde x^2 ] - 2 \alpha \mathbb{E}[\tilde x\tilde z] + \mathbb{E}[\tilde z^2] = \alpha^2 + 1$ so that
\begin{align}
    \mathrm{MMSE}\left( \tilde x, \breve {x}\right)& = 1 -  \frac{2 \alpha^2}{1 +\alpha^2}  + \frac{\alpha^2}{1 +\alpha^2}  = \frac{1}{1 +\alpha^2}
\end{align}
In case $\alpha=0$, exploiting symmetry, the expression in \eqref{eq: MSE integral1},~\eqref{eq: MSE integral2} yields
\begin{align}
    &\mathrm{MMSE}\left( \tilde x, \breve {x} \right)\\
    &= 1 + 2 \int_{0}^{+\infty} \int_{-\infty}^{+\infty}  \tilde h^2\left( \beta + \nu \right) \phi \left( \nu \right) \phi \left( t \right)d\nu dt\\
     & -4 \int_{0}^{+\infty} \int_{-\infty}^{+\infty}  t \tilde  h\left( \beta + \nu \right)  \phi \left( \nu \right) \phi \left( t \right)d\nu dt \\
     & = 1 + 2 \int_{-\infty}^{+\infty}  \tilde h^2\left( \beta + \nu \right) \phi \left( \nu \right) d\nu \int_{0}^{+\infty} \phi \left( t \right) dt \\
     \label{eq: integral0}
     &- 4  \int_{-\infty}^{+\infty}  \tilde h\left( \beta + \nu \right) \phi \left( \nu \right) d\nu \int_{0}^{+\infty}t \phi \left( t \right) dt \\
     \label{eq: integral1}
     &=1  +  \int_{-\infty}^{+\infty}  \tilde h^2\left( \beta + \nu \right) \phi \left( \nu \right) d\nu  \\
     \label{eq: integral2}
     &- \frac{4}{\sqrt{2 \pi}}\int_{-\infty}^{+\infty}  \tilde h\left( \beta + \nu \right) \phi \left( \nu \right) d\nu 
\end{align}
where the primitive for the leftmost integral in \eqref{eq: integral0} can be found in \cite[Identity 10 ]{Owen_1980}.

Since $ \tilde h\left( \beta + \nu \right) = \sqrt{2 / \pi} \tanh\left[\beta \left(\beta + \nu \right) \right]$, the integrals in \eqref{eq: integral1} and \eqref{eq: integral2} are not trivial to develop further. To promote tractability, we adopt a reasonable approximation through the error function $\operatorname{erf}(\xi) = 2 \Phi(\xi\sqrt{2}) - 1$: $\tanh(\xi) \simeq \operatorname{erf}\left( \xi \sqrt{\pi} / 2 \right)$. This is equivalent to using an $\operatorname{erf}$-based decoder instead of the optimal MMSE decoder, and hence we upper-bound the MMSE with 
\begin{align}
&\mathrm{MSE}\left( \tilde x, \breve{x}\right)  \\
\label{eq: integral3}
&= 1 + \frac{2}{\pi} \int_{-\infty}^{+\infty} \left\{ 2 \Phi\left[ \sqrt{\frac{\pi}{2}} \beta \left(\beta + \nu \right) \right] - 1 \right\}^2\phi \left( \nu \right) d\nu  \\
&- \frac{4}{\sqrt{2 \pi}} \sqrt{\frac{2}{\pi}} \int_{-\infty}^{+\infty} \left\{ 2 \Phi\left[ \sqrt{\frac{\pi}{2}} \beta \left(\beta + \nu \right) \right] - 1 \right\} \phi \left( \nu \right)  d\nu 
\end{align}
After developing the square in \eqref{eq: integral3}, and rearranging the terms
\begin{align}
&\mathrm{MSE}\left( \tilde x, \breve{x}\right) = 1 + \frac{6}{\pi} \int_{-\infty}^{+\infty} \phi \left( \nu \right)  d\nu\\
\label{eq: integral4}
&+  \frac{8}{\pi} \int_{-\infty}^{+\infty} \Phi^2\left[ \sqrt{\frac{\pi}{2}} \beta \left(\beta + \nu \right) \right]\phi \left( \nu \right) d\nu  \\
\label{eq: integral5}
&- \frac{16}{\pi}\int_{-\infty}^{+\infty} \Phi\left[\sqrt{\frac{\pi}{2}} \beta \left(\beta + \nu \right) \right] \phi \left( \nu \right)  d\nu
\end{align}
To solve \eqref{eq: integral4} and \eqref{eq: integral5}, we rely on \cite[Identities 20,010.4; 10,010.8]{Owen_1980}
\begin{align}
&\mathrm{MSE}\left( \tilde x, \breve{x}\right) = 1 + \frac{6}{\pi} - \frac{16}{\pi} \Phi \left( \frac{a}{\sqrt{1 + b^2}} \right)\\
&+ \frac{8}{\pi}\left[ \Phi \left( \frac{a}{\sqrt{1 + b^2}} \right) - 2T\left( \frac{a}{\sqrt{1 + b^2}}; \frac{1}{\sqrt{1 + 2b^2}}\right)\right]
\end{align}
where $a = \beta^2 \sqrt{\pi / 2}$, $b = \beta \sqrt{\pi / 2}$. Now using the identities $\Phi(-x) = 1 - \Phi(x)$, and $T(-h, c) = T(h, c)$ \cite{Owen_1956AMS} we finally get
\begin{align}
&\mathrm{MSE}\left( \tilde x, \breve{x}\right) = \frac{\pi-2}{\pi}\\
&+\frac{8}{\pi} \left[
Q\left(\frac{a}{\sqrt{1 + b^2}}  \right)
- 2T\left(\frac{-a}{\sqrt{1 + b^2}}, \;  \frac{1}{\sqrt{1 + 2b^2}}
\right)\right]  \\
&= \frac{\pi-2}{\pi} + \frac{8}{\pi}\Phi_\mathrm{SN}\left(-\frac{a}{\sqrt{1 + b^2}}; \frac{1}{\sqrt{1 + 2b^2}} \right)\\
&= \frac{\pi-2}{\pi} + \frac{8}{\pi}\Phi_{\mathrm{SN}}\!\left(
-\frac{\beta^2\sqrt{\pi}}{\sqrt{2+\pi\beta^2}};\;
\frac{1}{\sqrt{1+\pi\beta^2}}
\right)
\end{align}

\end{proof}

\section*{Appendix B}
\begin{proof}[Proof of Lemma~\ref{lemma: decoder ad}]
Similarly to the proof of Lemma~\ref{lemma: decoder}, given an encoder $\tilde g$, the MMSE decoder can be expressed as \cite{Akyol_2014TIT}[Theorem~3]
\begin{align}
      \tilde h_{\alpha}(\tilde w)&= \frac{\int_{-\infty}^{+\infty} \eta f_{\tilde Z}(\tilde{w} - \tilde g(\eta)) f_{\tilde X} (\eta)\, d\eta}{\int_{-\infty}^{+\infty} f_{\tilde Z}(\tilde w- \tilde g(t)) f_{\tilde X}(\eta) \, d\eta} \\
       \label{eq: dec ad 1}
       &= \frac{\int_{-t}^{+t}\phi\left(\tilde g(\eta) - \tilde w) \right) \phi\left(\eta\right) d\eta}{\int_{-t}^{+t}  \phi\left( \tilde g(\eta) - \tilde w\right) \phi\left( \eta \right) d\eta}\\
       \label{eq: dec ad 2}
       &= \frac{\int_{-t}^{+t}\phi\left(\alpha \eta - \tilde w \right) \phi\left(\eta\right) d\eta}{\int_{-t}^{+t}  \phi\left( \alpha \eta - \tilde w\right) \phi\left( \eta \right) d\eta} = \frac{\left[ N \right]^{+t}_{-t}}{\left[ D \right]^{+t}_{-t}}
\end{align}
The primitives $N$ and $D$ are given by \eqref{eq: primitive N} and \eqref{eq: primitive D} by replacing $\nu_\pm $ with $- \tilde w$:
\begin{align}
N&= \frac{-1}{\varsigma^2}\phi\left( \frac{ \tilde w}{\varsigma}\right) \left[\phi\left( \eta \varsigma - \frac{\alpha \tilde w}{\varsigma} \right )
- \frac{\alpha \tilde w }{\varsigma}\Phi\left( \eta \varsigma - \frac{\alpha \tilde  w }{\varsigma} \right) \right]\\
D  &= \frac{1}{\varsigma}\phi\left( \frac{\tilde w}{\varsigma}\right)\Phi\left( \eta \varsigma - \frac{\alpha \tilde w }{\varsigma} \right).
\end{align}
with $\varsigma = \sqrt{1 + \alpha^2}$.
\begin{align}
\tilde h_{\alpha}(\tilde w)&=\frac{\left[ N \right]^{+t}_{-t}}{\left[ D \right]^{+t}_{-t}}\\
    &=-\frac{1}{\varsigma}\frac{\phi\left( t \varsigma - \alpha \tilde w / \varsigma \right )
- \alpha \tilde w \Phi\left( t \varsigma - \alpha \tilde w / \varsigma  \right) / \varsigma  }{\Phi\left( t\varsigma - \alpha \tilde w / \varsigma  \right) - \Phi\left( -t \varsigma - \alpha \tilde w / \varsigma  \right)} \\
&+ \frac{1}{\varsigma}\frac{\phi\left( -t \varsigma - \alpha \tilde w / \varsigma  \right )
- \alpha \tilde w \Phi\left(- t \varsigma - \alpha \tilde w / \varsigma  \right) / \varsigma  }{\Phi\left( t\varsigma - \alpha \tilde w / \varsigma \right) - \Phi\left( -t \varsigma - \alpha \tilde w / \varsigma  \right)} \\
&= -\frac{1}{\varsigma} \frac{\phi\left(\alpha \tilde w / \varsigma - t \varsigma  \right ) - \phi\left(\alpha \tilde w / \varsigma +  t \varsigma  \right )}{Q\left( \alpha \tilde w / \varsigma - t\varsigma \right) - Q\left( \alpha \tilde w / \varsigma + t\varsigma  \right)} \\
&+ \frac{\alpha \tilde w}{\varsigma^2} \frac{Q\left( \alpha \tilde w / \varsigma - t\varsigma \right) - Q\left( \alpha \tilde w / \varsigma + t\varsigma \right)}{Q\left( \alpha \tilde w / \varsigma - t\varsigma \right) - Q\left( \alpha \tilde w / \varsigma + t\varsigma  \right)} \\
&= \frac{\alpha \tilde w}{\varsigma^2} - \frac{1}{\varsigma} \frac{\phi\left(\alpha \tilde w / \varsigma - t \varsigma  \right ) - \phi\left(\alpha \tilde w / \varsigma +  t \varsigma  \right )}{Q\left( \alpha \tilde w / \varsigma - t\varsigma \right) - Q\left( \alpha \tilde w / \varsigma + t\varsigma  \right)} 
\end{align}
Since $\phi(\pm \infty) = 0$, $Q(-\infty) = 1$ and $Q(-\infty) = 0$, for $t=+\infty$ and $- t= -\infty$, as expected, we obtain the MMSE decoder for the Gaussian signal
\begin{align}
\tilde h_{\alpha}(\tilde w) = \frac{\alpha \tilde w}{\varsigma^2} = \frac{\alpha }{1 + \alpha^2} \tilde w.
\end{align}

\end{proof}

\begin{proof}[Proof of the power constraint for anomaly detection]

A constraint on $\mathbb{E}_X [ g^2(x)] \leq P$, in terms of $ \tilde g( \tilde x)$ becomes
\begin{align}
& \mathbb{E} \left[ \tilde g^2(\xi) \right] = \int_{-\infty}^{\infty} \tilde g^2\left( \eta \right) f_{\tilde X}(\eta) d \eta \\
&= \int_{-t}^{t} \tilde g^2\left( \eta \right) \phi(\eta) d\eta + \int_{|\eta| > t}  \tilde g^2 \left(  \eta \right)\phi(\eta) d\eta \\
&= \theta \int_{-t}^{t}  \tilde g^2\left( \eta \right) \frac{\phi(\eta)}{\theta} d\eta + \left(1-\theta \right)\int_{|\eta| > t} \tilde g^2 \left(  \eta \right) \frac{\phi(\eta)}{1-\theta} d\eta \\
&= \theta \operatorname{SNR}^\ok + \left( 1 - \theta  \right) \operatorname{SNR}^\ko \leq \operatorname{SNR}
\end{align}
Using the definition of $\tilde g(\xi)$, we can express the two terms as
\begin{align}
\operatorname{SNR}^\ok &= 
\label{eq: integral power ok}
\int_{-t}^{+t}\alpha^2 \eta^2  \frac{\phi(\eta)}{\theta} d\eta = \frac{\alpha^2 }{\theta}\int_{-t}^{+t}\eta^2  \phi(\eta) d\eta \\
& = \frac{\alpha^2 }{\theta} \left[ \Phi(\eta) - \eta \phi(\eta)\right]^{+t}_{-t}\\
& = \frac{\alpha^2 }{\theta} \left[\Phi(t) - t \phi(t) - \Phi(-t) - t \phi(-t) \right] \\
&=  \frac{ \alpha^2 }{\theta} \left[1 - 2 Q(t) - 2 t \phi(t) \right] = \alpha^2 \left[ 1 - \frac{2t}{\theta} \phi \left( t \right)\right] 
\end{align}
where the primitive for the integral in \eqref{eq: integral power ok} can be found in \cite[Identity 12]{Owen_1980} and, in the last equality,  we exploit the definition of $\theta$.

While,
\begin{align}
\operatorname{SNR}^\ko &= 
\int_{-\infty}^{-t}\left(\beta \eta - \delta \right)^2  \frac{\phi(\eta)}{1-\theta} d\eta + \int_{+t}^{+\infty}\left(\beta \eta + \delta \right)^2  \frac{\phi(\eta)}{1-\theta} d\eta \\
\label{eq: integral power ok2}
&= \frac{2}{1-\theta}\int_{+t}^{+\infty}\left(\beta \eta + \delta \right)^2  \phi(\eta) d\eta 
\end{align}
where in the last equality we exploit symmetry. Now, to solve the integral we use rely on \cite[Identities 11, 12]{Owen_1980} and the definition of $Q(\xi)$
\begin{align}
&\int_{+t}^{+\infty}\left(\beta \eta + \delta \right)^2  \phi(\eta) = \beta^2 \int_{+t}^{+\infty} \eta^2  \phi(\eta) d\eta \\
&+ 2 \beta \delta \int_{+t}^{+\infty} \eta  \phi(\eta) d\eta + \delta^2 \int_{+t}^{+\infty} \phi(\eta) d\eta\\
&=\beta^2 \left[ \Phi(\eta) - \eta \phi(\eta)\right]_{+t}^{+\infty}-2 \beta \delta \left[ \phi(\eta)\right]_{+t}^{+\infty} + \delta^2 Q(t) \\
&= \beta^2 \left[ 1 - \Phi(t) + t \phi\left(t\right)\right] + 2 \beta \delta \phi(t) + \delta^2 Q(t) 
\end{align} 
where in the last equality we used the fact that $+\infty \cdot \phi(+\infty) = 0$ since the exponential decades faster to $0$ then the linear term grows to infinity. Finally, after rearranging the terms, from \eqref{eq: integral power ok2} we get
\begin{align}
&\operatorname{SNR}^\ko = \frac{2}{1-\theta} \left[ \left( \beta^2 + \delta^2 \right) Q(t) + \beta \left( \beta t + 2 \delta \right) \phi(t) \right]
\end{align}

\end{proof}

\begin{proof}[Proof of Lemma~\ref{lemma: detector}]
The log-likelihood scoring function of $w$ is \cite{Marchioni_TSMC2024}
\begin{align}
    \label{eq: optimal score}
    S\left( w \right) &= -\log{ f_{W^\ok} \left( w \right) }
\end{align}
which requires the channel's output distribution knowledge.
The output of the channel can be expressed by marginalization
\begin{align}
    \label{eq: output distribution}
f_{W}(w) &= \int_{-\infty}^{+\infty}  f_{X, W}(\xi, w)\, d\xi  \\
&= \int_{-\infty}^{+\infty}  f_X(\xi)  f_{W | X} \left(w | \xi \right)\, d\xi\\
&= \int_{-\infty}^{+\infty}  f_{X}\left( \xi \right)  f_Z\left[w- g\left(\xi\right)\right]\, d\xi
\end{align}
For the normal signal, we have
\begin{align}
&f_{W^\ok}(w) = \int_{-\infty}^{+\infty}  f_{X^\ok}\left( \xi \right)  f_Z\left[w- g\left(\xi\right)\right]\, d\xi \\
&=\int_{-T}^{+T} \frac{1}{\theta \sigma_X}\phi\left( \frac{\xi}{\sigma_X} \right) \frac{1}{\sigma_Z}\phi\left[ \frac{w - g(\xi) }{\sigma_Z} \right]  \, d\xi \\
\label{eq: integral output distribution}
&= \frac{1}{\theta \sigma_Z}\int_{-t}^{+t}\phi\left[ \frac{w}{\sigma_Z} - \tilde g(\eta) \right] \phi\left( \eta\right) \, d \eta \\
&= \frac{1}{\sigma_Z}\int_{-t}^{+t} f_{\tilde Z}\left[ \tilde w - \tilde g(\eta) \right] f_{\tilde X^\ok}\left( \eta\right) \, d \eta = \frac{1}{\sigma_Z} f_{\tilde W^\ok}(\tilde w) 
\end{align}
where to obtain \eqref{eq: integral output distribution}, we set $\xi = \sigma_X \eta$, and $t=T/\sigma_X$.
Now, since for the normal signal $\tilde g(\xi) = \alpha \xi$, \eqref{eq: integral output distribution}, up to a constant, matches the denominator of \eqref{eq: dec ad 2}, so that
\begin{align}
&f_{\tilde W^\ok}(\tilde w)  = \frac{1}{\theta}\left[ D \right]^{+t}_{-t} =\frac{1}{\theta}\left[\frac{1}{\varsigma}\phi\left( \frac{\tilde w}{\varsigma}\right)\Phi\left( \eta \varsigma - \frac{\alpha \tilde w }{\varsigma} \right)\right]^{+t}_{-t} \\
\label{eq: channels distribution final}
&= \frac{1}{\theta \varsigma }\phi\left( \frac{ \tilde w}{\varsigma}\right) \left[ Q\left( \frac{\alpha \tilde w}{\varsigma} - t\varsigma \right) - Q\left( \frac{\alpha \tilde w}{\varsigma} + t\varsigma  \right) \right]
\end{align}
with $\varsigma = \sqrt{1 + \alpha^2}$.

Note that, for $t=+\infty$, $Q(-\infty) = 1$, $Q(+\infty) = 0$, $\theta = 1$ and, $f_{\tilde W^\ok}(\tilde w) = \phi( \tilde w / \varsigma ) / \varsigma $, i.e., $\tilde W^{\ok} \sim \mathcal{N}\!\left(0,\,1+\alpha^2\right)$, as expected from $\tilde{w} = \alpha \tilde x + \tilde z$ where $\tilde x,\tilde  z \sim \mathcal{N}(0,1)$ and independent.

Now, with \eqref{eq: channels distribution final}, the optimal score defined in \eqref{eq: optimal score} can be written as
\begin{align}
 S\left( w \right) &= -\log{ f_{\tilde W^\ok} \left( \frac{w}{\sigma_Z} \right) } + \log{\sigma_Z} =  S\left( \tilde w \right) + \log{\sigma_Z}.
\end{align}
For a rank-based detector, the constant term is not relevant, so that $S\left( \tilde w \right)$ is equivalent to $ S\left( w \right) $. Moreover,
\begin{align}
S\left( \tilde w \right) & =\log{\sqrt{2 \pi}\theta \varsigma } + \frac{\tilde w^2}{2 \varsigma^2}\\
&- \log\left[Q\left( \frac{\alpha \tilde w}{\varsigma} - t\varsigma \right) - Q\left( \frac{\alpha \tilde w}{\varsigma} + t\varsigma  \right)\right]
\end{align}
is equivalent to
\begin{equation}
    \bar S\left( \tilde w \right) = \frac{\tilde w^2}{2\varsigma^2} - \log[q(\tilde w)]
\end{equation}
where
\begin{align}
\label{eq: q functional}
q\left( \tilde w\right) &= Q\left( \frac{\alpha \tilde w}{\varsigma} - t\varsigma \right) - Q\left( \frac{\alpha \tilde w}{\varsigma} + t\varsigma  \right) \\
\label{eq: q functional 2}
 &= Q\left( \frac{\alpha \tilde w}{\varsigma} - t\varsigma \right) + Q\left( -\frac{\alpha \tilde w}{\varsigma} - t\varsigma  \right) -1 
\end{align}
due to the fact that $Q(-\xi) = 1 - Q(\xi)$. This shows that $q( \tilde w )$ is an even function, i.e., $q\left( \tilde w\right) = q\left( -\tilde w\right)$. The first term in $\bar S( \tilde w )$ is even, the logarithm of an even function is even and so is the difference between two even functions. Therefore, $\bar S( \tilde w )$ is even and depends only on the absolute value of $ \tilde w$.

We now show that $\bar S\left( \tilde w \right)$ is strictly increasing on $\tilde w \geq 0$ by first computing
\begin{equation}
    \bar S'\left( \tilde w \right) = \frac{\tilde w}{\varsigma^2} - \frac{q'(\tilde w) }{q(\tilde w) }
\end{equation}
where, since $Q'(\xi) = -\phi(\xi)$, the derivative of $q(\tilde w)$ in 
\eqref{eq: q functional} is
\begin{equation}
    q'(\tilde w) = -\frac{\alpha}{\varsigma} \left[ \phi\left( \frac{\alpha \tilde w}{\varsigma} - t\varsigma \right) - \phi \left( \frac{\alpha \tilde w}{\varsigma} + t\varsigma  \right) \right].
\end{equation}
Given that for $ \tilde w > 0$
\begin{equation}
\left| \frac{\alpha \tilde w}{\varsigma} - t\varsigma \right| < \left| \frac{\alpha \tilde w}{\varsigma} + t\varsigma \right|
\end{equation}
and $\phi \left( \xi \right)$ is strictly decreasing in $|\xi|$,
\begin{equation}
\phi\left( \frac{\alpha \tilde w}{\varsigma} - t\varsigma \right) > \phi \left( \frac{\alpha \tilde w}{\varsigma} + t\varsigma  \right),
\end{equation}
therefore $q'(\tilde w) < 0$.

Finally, since the density must be positive, according to \eqref{eq: channels distribution final}, $q(\tilde w)$ is positive, hence $q'(\tilde w)/q(\tilde w)$ is overall negative. So for $\tilde w > 0$,  $\bar S'\left( w \right)$ is positive and therefore $\bar S\left(\tilde w \right)$ is a strictly increasing function of $\tilde w > 0$ and $|\tilde w|$.

\end{proof}

\begin{proof}[Proof of Lemma~\ref{lemma: detector performance}]
Assuming $\tilde x^\ko \sim f^\tails_{\tilde X^{\ko}}(\xi)$ the risk of $\gamma(\tilde w)$ can be expressed as
\begin{align}
    &\mathcal{R}(\gamma) = \Pr\left\{\gamma(\tilde W) \neq C \right\}=  \Pr\left\{ \left| \tilde W \right| > \psi \wedge \left| \tilde  X \right| <  t \right\}\\
    &+  \Pr\left\{\left| \tilde  W \right| < \psi \wedge \left| \tilde  X \right| > t \right\}=  \theta \Pr\left\{ \left| \tilde W \right| > \psi \mid \left| \tilde  X \right| <  t\right\}\\
    &+ \left( 1 - \theta \right) \Pr\left\{\left| \tilde  W \right| < \psi \mid \left| \tilde  X \right| > t \right\}\\
    &= \theta \operatorname{FPR} \left( \psi \right) + \left( 1 - \theta \right) \operatorname{FNR}\left( \psi \right)
\end{align}

\begin{align}
    &\operatorname{FPR} \left( \psi \right) = \int_{-t}^{+t} \Pr\left\{ \left| \tilde W \right| > \psi \mid \tilde X = \eta \right\} f_{\tilde X^\ok}(\eta) \, d\eta \\
    &= \frac{1}{\theta}\int_{-t}^{+t}   \Pr\left\{ \left| \tilde g(\eta) + \tilde Z \right| > \psi \right\} \phi(\eta) \, d\eta\\
    &= \frac{1}{\theta}\int_{-t}^{+t}   \Pr\left\{  \tilde g(\eta) + \tilde Z  > \psi \right\}  \phi(\eta) \, d\eta\\
    &+ \frac{1}{\theta}\int_{-t}^{+t}    \Pr\left\{  \tilde g(\eta) + \tilde Z  < - \psi \right\} \phi(\eta) \, d\eta\\
    &= \frac{1}{\theta}\int_{-t}^{+t}   \Pr\left\{ \tilde Z  > \psi - \alpha \eta  \right\}  \phi(\eta) \, d\eta\\
    &+ \frac{1}{\theta}\int_{-t}^{+t}    \Pr\left\{   \tilde Z  < - \psi -\alpha \eta \right\} \phi(\eta) \, d\eta \\
    &= \frac{1}{\theta}\int_{-t}^{+t}  \left[ 1 - \Phi\left(\psi - \alpha \eta  \right)\right] \phi(\eta) \, d\eta\\
    &+ \frac{1}{\theta}\int_{-t}^{+t}    \Phi\left(- \psi - \alpha \eta  \right)  \phi(\eta) \, d\eta \\
    &=\frac{1}{\theta}\int_{-t}^{+t} \left[\Phi\left(- \psi + \alpha \eta\right)  +  \Phi\left(- \psi - \alpha \eta  \right)\right]  \phi(\eta) \, d\eta
\end{align}
Now, according to \cite[Identity 10,010.4]{Owen_1980}
\begin{align}
    &\operatorname{FPR} \left( \psi \right) \\
    &= \frac{1}{\theta}\int^{-\frac{\psi}{\varsigma}}_{-\infty} \left[ \Phi \left( t \varsigma + \alpha \xi \right) - \Phi \left( -t \varsigma + \alpha \xi \right) \right]    \phi(\xi) d\xi \\
    \label{eq: fpr 1}
    &+ \frac{1}{\theta}\int^{-\frac{\psi}{\varsigma}}_{-\infty} \left[ \Phi \left( t \varsigma - \alpha \xi \right) - \Phi \left( -t \varsigma - \alpha \xi \right) \right]    \phi(\xi) d\xi \\
    \label{eq: fpr 2}
    &= \frac{2}{\theta} \int^{-\frac{\psi}{\varsigma}}_{-\infty}  \left[ \Phi \left( t \varsigma + \alpha \xi \right) - \Phi \left( -t \varsigma + \alpha \xi \right) \right]    \phi(\xi) d\xi
\end{align}
where the last equality uses $\Phi(-\xi) = 1 - \Phi(\xi)$ to show the two integrands are identical.
If we set $\alpha = - \rho / \sqrt{1-\rho^2} $, $t \varsigma = t / \sqrt{1-\rho^2}$,
\begin{align}
    &\operatorname{FPR} \left( \psi \right) = \frac{2}{\theta} \int^{-\frac{\psi}{\varsigma}}_{-\infty} \Phi \left( \frac{t - \rho \xi}{\sqrt{1-\rho^2}} \right)   \phi(\xi) d\xi\\
    &-\frac{2}{\theta} \int^{-\frac{\psi}{\varsigma}}_{-\infty} \Phi \left( \frac{-t - \rho \xi}{\sqrt{1-\rho^2}} \right)   \phi(\xi) d\xi \\
    \label{eq: fpr3}
    &= \frac{2}{\theta}\left[\Phi_2 \left(-\frac{\psi}{\varsigma}, t; \rho\right) - \Phi_2 \left(-\frac{\psi}{\varsigma}, -t; \rho\right) \right] \\
    &= \frac{2}{\theta}\left[\Phi_2 \left(-\frac{\psi}{\varsigma}, t; -\frac{\alpha}{\varsigma} \right) - \Phi_2 \left(-\frac{\psi}{\varsigma}, -t; -\frac{\alpha}{\varsigma}\right) \right]
\end{align}
where the expression \eqref{eq: fpr3} is given by \cite[Identity 10,010.2]{Owen_1980} with
\begin{align}
    &\Phi_2(h, k; \rho) \\
    &= \frac{1}{2\pi \sqrt{1-\rho^2}}\int_{-\infty}^{k}\int_{-\infty}^{h}\exp\left[-\left( \frac{\nu^2 - 2\rho \nu \eta + \eta^2}{2 (1-\rho^2)} \right)\right]d\nu d\eta
\end{align}
the CDF of a bi-variate Gaussian distribution.

Similarly,
\begin{align}
    &\operatorname{FNR} \left( \psi \right) = \int_{|\eta| > t} \Pr\left\{ \left|\tilde  W \right| < \psi \mid X = \eta \right\} f^\tails_{\tilde X^\ko}(\eta) \, d\eta\\
    &= \frac{1}{1-\theta}\int_{|\eta| > t} \Pr\left\{ \left| \tilde g(\eta) + \tilde Z \right| < \psi \right\}  \phi(\eta) \, d\eta \\
    &= \frac{1}{1-\theta} \int_{-\infty}^{-t} \Pr\left\{ \left| \beta \eta - \delta  + \tilde Z \right| < \psi \right\} \phi(\eta) \, d\eta\\
    &+ \frac{1}{1-\theta} \int_{+t}^{+\infty} \Pr\left\{ \left| \beta \eta + \delta  + \tilde Z\right| < \psi \right\} \phi(\eta) \, d\eta
\end{align}
Exploiting symmetry,
\begin{align}
    &\operatorname{FNR} \left( \psi \right) = \frac{2}{1-\theta} \int_{-\infty}^{-t} \Pr\left\{ \left| \beta \eta - \delta  + \tilde Z \right| < \psi \right\} \phi(\eta) \, d\eta \\
    &= \frac{2}{1-\theta} \int_{-\infty}^{-t} \Pr\left\{ \tilde Z  < \psi - \beta \eta + \delta \right\} \phi(\eta) \, d\eta \\
    &- \frac{2}{1-\theta}  \int_{-\infty}^{-t} \Pr\left\{ \tilde Z  < -\psi - \beta \eta + \delta \right\} \phi(\eta) \, d\eta \\
    \label{eq: fnr integral 1}
    &= \frac{2}{1-\theta}  \int_{-\infty}^{-t} \Phi\left(\delta  + \psi - \beta \eta\right) \phi(\eta) \, d\eta \\
    \label{eq: fnr integral 2}
    &- \frac{2}{1-\theta} \int_{-\infty}^{-t}  \Phi \left( \delta - \psi  - \beta \eta \right)  \phi(\eta) \, d\eta
\end{align}
If we set $\beta = \omega / \sqrt{1-\omega^2} $, and $\kappa_\pm = (\delta \pm \psi ) / \sqrt{1-\omega^2}$, according to \cite[Identity 10,010.2]{Owen_1980}
\begin{align}
    &\operatorname{FNR} \left( \psi \right) = \frac{2}{1-\theta}  \int_{-\infty}^{-t} \Phi\left(\frac{\kappa_+ - \omega \eta}{\sqrt{1-\omega^2}}\right) \phi(\eta) \, d\eta \\
    &-\frac{2}{1-\theta}  \int_{-\infty}^{-t} \Phi\left(\frac{\kappa_-  -\omega \eta}{\sqrt{1-\omega^2}}\right) \phi(\eta) \, d\eta \\
    &= \frac{2}{1-\theta}\left[\Phi_2 \left(-t, \kappa_+; \omega\right) - \Phi_2 \left(-t, \kappa_-; \omega\right) \right] \\
    &= \frac{2}{1-\theta}\left[\Phi_2 \left(-t, \frac{\delta + \psi}{\vartheta}; -\frac{\beta}{\vartheta} \right) - \Phi_2 \left(-t, \frac{\delta - \psi}{\vartheta}; -\frac{\beta}{\vartheta}\right) \right]\\
    &= \frac{2}{1-\theta}\left[\Phi_2 \left(\frac{\delta + \psi}{\vartheta}, -t; -\frac{\beta}{\vartheta} \right) - \Phi_2 \left(\frac{\delta - \psi}{\vartheta}, -t; -\frac{\beta}{\vartheta}\right) \right]  
\end{align}
where $\vartheta = \sqrt{1 + \beta^2}$ and the last equality is due to symmetry of $\Phi_2(h, k; \rho)$.
\end{proof}

\begin{proof}[Proof of FNR for $\beta=0$]
When $\beta=0$, \eqref{eq: fnr integral 1} and \eqref{eq: fnr integral 2} become
\begin{align}
    &\operatorname{FNR} \left( \psi \right) = \frac{2}{1-\theta}  \int_{-\infty}^{-t} \Phi\left(\delta  + \psi \right) \phi(\eta) \, d\eta \\
    &- \frac{2}{1-\theta} \int_{-\infty}^{-t}  \Phi \left( \delta - \psi\right)  \phi(\eta) \, d\eta \\
    & = \left[\Phi\left(\delta  + \psi \right)  - \Phi \left( \delta - \psi\right)\right] \frac{2}{1-\theta} \int_{-\infty}^{-t} \phi(\eta) \, d\eta \\
    \label{eq: risk ad beta=0}
    &= \Phi\left(\delta  + \psi \right)- \Phi \left( \delta - \psi\right) = Q\left( \delta - \psi\right) - Q\left(\delta  + \psi \right) 
\end{align}
This result is immediately generalizable for any symmetric $f_{\tilde X^\ko}(\xi)$. 
\end{proof}

\begin{proof}[Proof of FNR for the uniform anomaly]
Assuming $f^\unknown_{\tilde X^\ko}(\xi) = f^\uniform_{\tilde X^\ko}(\xi) = 1 / [2(m - t)] \, \mathbf{1}_{[-m, -t] \cup [t, m]}(\xi)
$ with $m > t$ and following a similar rationale that led to \eqref{eq: fnr integral 1} and \eqref{eq: fnr integral 2} 
\begin{align}
    &\operatorname{FNR}^\uniform \left( \psi \right) = 2\int_{-\infty}^{-t} \Phi\left(\delta  + \psi - \beta \eta\right) f^\uniform_{\tilde{X}^{\ko}}(\eta)\, d\eta \\
    &- 2 \int_{-\infty}^{-t}  \Phi \left( \delta - \psi  - \beta \eta \right)  f^\uniform_{\tilde{X}^{\ko}}(\eta) \, d\eta \\
    &= \frac{1}{m-t}\int_{-m}^{-t} \left[\Phi\left(\delta  + \psi - \beta \eta\right) - \Phi \left( \delta - \psi  - \beta \eta \right) \right]d\eta  
\end{align} 
Now using \cite[Identity 10,000]{Owen_1980} with $\kappa_\pm = \delta  \pm \psi $
\begin{align}
& \operatorname{FNR}^\uniform \left( \psi \right) =\\
&- \frac{1}{m-t}\left[\left(\frac{\kappa_+ - \beta \eta}{\beta}\right) \Phi \left( \kappa_+ - \beta \eta \right) + \frac{1}{\beta} \phi( \kappa_+ - \beta \eta) \right]^{-t}_{-m} \\
&+ \frac{1}{m-t} \left[\left(\frac{\kappa_- - \beta \eta}{\beta}\right) \Phi \left( \kappa_- - \beta \eta \right) + \frac{1}{\beta} \phi( \kappa_- - \beta \eta)\right]^{-t}_{-m}
\end{align} 
If we define $G ( \xi ) =  \xi  \Phi (\xi) + \phi( \xi) $,
\begin{align}
& \operatorname{FNR}^\uniform \left( \psi \right) = -\frac{1}{\beta\left(m-t \right)} \left[ G(\kappa_+ - \beta \eta) \right]^{-t}_{-m}\\
&+ \frac{1}{\beta\left(m-t \right)}\left[ G(\kappa_- - \beta \eta)\right]^{-t}_{-m}\\
& = \frac{1}{\beta\left(m-t \right)} \left[- G(\kappa_+ + \beta t) + G(\kappa_+ + \beta m)\right]\\
&+ \frac{1}{\beta\left(m-t \right)} \left[G(\kappa_- + \beta t) - G(\kappa_- + \beta m) \right]
\end{align}

\end{proof}

\begin{proof}[Proof of the expression for the risk stationarity condition]
Given the risk expression
    \begin{align}
    &\mathcal{R}(\gamma_{\alpha, \beta, \delta}) 
    = 2\left[\Phi_2 \left(-\frac{\psi}{\varsigma}, t; -\frac{\alpha}{\varsigma} \right) - \Phi_2 \left(-\frac{\psi}{\varsigma}, -t; -\frac{\alpha}{\varsigma}\right) \right]\\
    &+2\left[\Phi_2 \left(\frac{\kappa_+}{\vartheta}, -t; \frac{-\beta}{\vartheta} \right) - \Phi_2 \left(\frac{\kappa_-}{\vartheta}, -t ; \frac{-\beta}{\vartheta}\right) \right]
    \end{align}
    with $\varsigma = \sqrt{1 + \alpha^2}$, $\vartheta = \sqrt{1 + \beta^2}$ and $\kappa_\pm = \delta \pm \psi$,
to compute the $\partial / \partial \psi[\mathcal{R}(\gamma_{\alpha, \beta, \delta})]$ we first note that, according to \cite[Identity 10,010.2]{Owen_1980}, the CDF can be written as
\begin{equation}
    \label{eq: CDF bivariate normal}
    \Phi_2(h, k; \rho) = \int^{h}_{-\infty} \Phi \left( \frac{k - \rho \xi}{\sqrt{1-\rho^2}} \right)   \phi(\xi) d\xi.
\end{equation}
The derivative of \eqref{eq: CDF bivariate normal} w.r.t. to the first argument now can be computed as
\begin{align}
    &\frac{\partial}{\partial h}\Phi_2(h, k; \rho)=
    \frac{\partial}{\partial h}\lim_{l \to -\infty}\int^{h}_{l} \Phi \left( \frac{k - \rho \xi}{\sqrt{1-\rho^2}} \right)   \phi(\xi) d\xi\\
    &=\lim_{l \to -\infty} \frac{\partial}{\partial h} \int^{h}_{l} \Phi \left( \frac{k - \rho \xi}{\sqrt{1-\rho^2}} \right)   \phi(\xi) d\xi\\
    &=\phi(h)\Phi \left(\frac{k - \rho h}{\sqrt{1-\rho^2}}\right).
\end{align}
where the last equality is due to the fundamental theorem of calculus.
With this, the derivative of the risk w.r.t. the detection threshold $\psi$ becomes
\begin{align}
&\frac{\partial}{\partial \psi}\mathcal{R}(\gamma_{\alpha, \beta, \delta}) = -\frac{2}{\varsigma} \phi\left(-\frac{\psi}{\varsigma}\right)  \Phi \left(\frac{t + \rho \psi /  \varsigma}{\sqrt{1-\rho^2}}\right)\\
&+\frac{2}{\varsigma} \phi\left(-\frac{\psi}{\varsigma}\right) \Phi \left(\frac{-t + \rho \psi /  \varsigma}{\sqrt{1-\rho^2}}\right) \\
&+ \frac{2}{\vartheta} \phi \left( \frac{\delta + \psi}{\vartheta}\right) \Phi\left[ \frac{-t + \omega (\delta + \psi) / \vartheta}{\sqrt{1-\omega^2}} \right] \\
&+ \frac{2}{\vartheta} \phi \left( \frac{\delta - \psi}{\vartheta}\right) \Phi\left[ \frac{-t + \omega (\delta - \psi) / \vartheta}{\sqrt{1-\omega^2}} \right] \\
&=\frac{2}{\varsigma} \phi\left(\frac{\psi}{\varsigma}\right) \left[ \Phi \left(\frac{-t\varsigma^2 - \alpha \psi }{\varsigma} \right) - \Phi \left(\frac{t\varsigma^2 - \alpha \psi }{\varsigma} \right) \right]\\
&+ \frac{2}{\vartheta} \phi \left( \frac{\delta + \psi}{\vartheta}\right) \Phi\left[ \frac{-t \vartheta^2 + \beta (\delta + \psi)}{\vartheta} \right] \\
&+ \frac{2}{\vartheta} \phi \left( \frac{\delta - \psi}{\vartheta}\right) \Phi\left[ \frac{-t \vartheta^2 + \beta (\delta - \psi)}{\vartheta} \right]
\end{align}
where the previous definitions $\alpha = - \rho / \sqrt{1-\rho^2} $, $t \varsigma = t / \sqrt{1-\rho^2}$, $\vartheta = \sqrt{1 + \beta^2}$ and $\beta = \omega / \sqrt{1-\omega^2} $ have been used.
Rewriting in terms of $Q$,
\begin{align}
\label{eq: risk stationary}
&\frac{\partial}{\partial \psi}\mathcal{R}(\gamma_{\alpha, \beta, \delta}) \\
&= \frac{2}{\varsigma} \phi\left(\frac{\psi}{\varsigma}\right) \left[ Q\left(\frac{\alpha \psi + t\varsigma^2 }{\varsigma} \right) - Q \left(\frac{\alpha \psi  - t\varsigma^2}{\varsigma} \right) \right]\\
&+ \frac{2}{\vartheta} \phi \left( \frac{\delta + \psi}{\vartheta}\right) Q\left[ \frac{t \vartheta^2  - \beta (\delta + \psi)}{\vartheta} \right] \\
&+ \frac{2}{\vartheta} \phi \left( \frac{\delta - \psi}{\vartheta}\right) Q\left[ \frac{ t \vartheta^2 - \beta (\delta - \psi)}{\vartheta} \right]
\end{align}

\end{proof}

\begin{proof}[Proof of Proposition~\ref{prop: design ad}]
Assuming  $\epsilon=0$, and adopting sign-only anomaly encoding ($\beta = 0$) the power constraint in \eqref{eq: power AD} results in
  \begin{equation}
\alpha^2 \left[ \theta - 2 t  \phi \left(t \right)\right] + 2 \delta^2 Q\left( t \right)  \leq \operatorname{SNR}.
 \end{equation}

On the other hand, FNR is given by \eqref{eq: risk ad beta=0} and together with Lemma~\ref{lemma: detector performance}, risk constraint becomes:

\begin{align}
& 2 \Phi_2 \left(- \frac{\psi}{ \varsigma}, t; -\frac{\alpha}{\varsigma} \right) - 2 \Phi_2 \left( - \frac{\psi}{\varsigma}, -t; -\frac{\alpha}{\varsigma} \right)  \\
& + \left( 1 - \theta \right)\left[ Q( \delta -  \psi )  -  Q ( \delta + \psi  ) \right] \leq P_e.
\end{align}

Finally, with $\beta=0$, the risk stationary condition in \eqref{eq: risk stationary} can be written as
\begin{align}
&\frac{2}{\varsigma} \phi\left(\frac{\psi}{\varsigma}\right) \left[ Q\left(\frac{\alpha \psi + t\varsigma^2 }{\varsigma} \right) - Q \left(\frac{\alpha \psi  - t\varsigma^2}{\varsigma} \right) \right]\\
\label{eq: derivative risk beta=0}
&+ 2 Q\left(t \right) \left[ \phi \left( \delta + \psi\right)
+ \phi \left( \delta - \psi\right) \right] = 0
\end{align}
Now, we assume that constraints are active at the solution. This heuristic leads to the fact that the design can be performed by solving
\begin{equation}
\label{eq: sytem 3 equations}
\displaystyle
\begin{dcases}
    \alpha^2 \left[ \theta - 2 t  \phi \left(t \right)\right] + \delta^2 \left( 1- \theta \right)  = \operatorname{SNR}\\
    2 \left[\Phi_2 \left(-\frac{\psi}{\varsigma}, t; -\frac{\alpha}{\varsigma} \right) - \Phi_2 \left( -\frac{\psi}{\varsigma}, -t; -\frac{\alpha}{\varsigma} \right) \right] \\
     + \left( 1 - \theta \right)\left[ Q( \delta -  \psi )  -  Q ( \delta + \psi  ) \right] = P_e  \\
     \frac{2}{\varsigma}\phi\left(\frac{\psi}{\varsigma}\right) \left[ Q\left(\alpha \frac{\psi}{\varsigma}  + t\varsigma  \right) - Q \left(\alpha \frac{\psi}{\varsigma} - t\varsigma \right) \right]\\
+ \left( 1 - \theta \right) \left[ \phi \left( \delta + \psi\right)
+ \phi \left( \delta - \psi\right) \right] = 0 
\end{dcases}
\end{equation}
where for \eqref{eq: derivative risk beta=0} we have exploited the fact that $2 Q(t) = 1 - \theta$.

From the power equation, we have that
\begin{equation}
    \delta(\alpha) = \sqrt{\frac{\operatorname{SNR} - \alpha^2 \left[\theta - 2 t \phi\left( t \right) \right]}{1 - \theta}},
\end{equation}
which is feasible iff $\alpha^2 \le \operatorname{SNR}/[\theta-2t\,\phi(t)]$. We also consider $\delta > 0$ as with the detector defined by Lemma~\ref{lemma: detector}, the sign is irrelevant. Eliminating $\delta$ from~\eqref{eq: sytem 3 equations} yields the two-equation
nonlinear system in $(\alpha,\psi)$:
\begin{equation}
\begin{dcases}
    2 \left[\Phi_2 \left(- \frac{\psi}{\varsigma}, t; -\frac{\alpha}{\varsigma} \right) - \Phi_2 \left( - \frac{\psi}{\varsigma}, -t; -\frac{\alpha}{\varsigma} \right) \right] \\
     + \left( 1 - \theta \right)\left[ Q( \delta(\alpha) -  \psi )  -  Q ( \delta(\alpha) + \psi  ) \right] = P_e  \\
     \frac{2}{\varsigma}\phi\left(\frac{\psi}{\varsigma}\right) \left[ Q\left(\alpha \frac{\psi}{\varsigma}  + t\varsigma  \right) - Q \left(\alpha \frac{\psi}{\varsigma}  - t\varsigma \right) \right]\\
+ \left( 1 - \theta \right) \left[ \phi \left( \delta(\alpha) + \psi\right)
+ \phi \left( \delta(\alpha) - \psi\right) \right] = 0 
\end{dcases}
\end{equation}
The solution $(\alpha^\star,\psi^\star)$ gives $\delta^\star=\delta(\alpha^\star)$ and completes the design.

\end{proof}

\end{document}